\DeclareMathOperator*{\argmax}{arg\,max}
\newcolumntype{C}[1]{>{\centering}m{#1}}
\newcommand{\EE}[1]{\mathbb{E}\left[#1\right]}
\theoremstyle{plain}
\newtheorem{thm}{Theorem}
\newtheorem{cor}{Corollary}
\newtheorem{rem}{Remark}
\newtheorem{defi}{Definition}
\newtheorem{assu}{Assumption}
\begin{document}

\title{UB3: Best Beam Identification in Millimeter Wave Systems via Pure Exploration Unimodal Bandits}


\author[1]{Debamita Ghosh}
\author[2]{Haseen Rahman}
\author[3]{Manjesh K. Hanawal}
\author[4]{Nikola Zlatanov}
\affil[1]{IITB-Monash Research Academy, IIT Bombay, India, email: debamita.ghosh@iitb.ac.in}
\affil[2]{MLiONS Lab, IEOR, IIT Bombay, India, email: haseenrahman@gmail.com}
\affil[3]{MLiONS Lab, IEOR IIT Bombay, India, email:mhanawal@iitb.ac.in}
\affil[4]{Innopolis University, Russia, email: n.zlatanov@innopolis.ru}


\maketitle

\begin{abstract}
Millimeter wave (mmWave) communications have a broad spectrum and can support data rates in the order of gigabits per second, as envisioned in 5G systems. However, they cannot be used for long distances due to their sensitivity to attenuation loss. To enable their use in the 5G network, it requires that the transmission energy be focused in sharp pencil beams. As any misalignment between the transmitter and receiver beam pair can reduce the data rate significantly, it is important that they are aligned as much as possible. To find the best transmit-receive beam pair, recent beam alignment (BA) techniques examine the entire beam space, which might result in a large amount of BA latency. Recent works propose to adaptively select the beams such that the cumulative reward measured in terms of received signal strength or throughput is maximized. In this paper, we develop an algorithm that exploits the unimodal structure of the received signal strengths of the beams to identify the best beam in a finite time using pure exploration strategies. Strategies that identify the best beam in a fixed time slot are more suitable for wireless network protocol design than cumulative reward maximization strategies that continuously perform exploration and exploitation. Our algorithm is named Unimodal Bandit for Best Beam (UB3) and identifies the best beam with a high probability in a few rounds. We prove that the error exponent in the probability does not depend on the number of beams and show that this is indeed the case by establishing a lower bound for the unimodal bandits. We demonstrate that UB3 outperforms the state-of-the-art algorithms through extensive simulations. Moreover, our algorithm is simple to implement and has lower computational complexity.

\end{abstract}

\begin{IEEEkeywords}
mmWave, Bandit learning, pure exploration
\end{IEEEkeywords}

\section{Introduction}\label{sec:intro}
There is a growing demand for higher data rates with the advent of emerging data-intensive applications like virtual reality, mobile gaming, and HD quality video streaming. The wireless networks have improved in terms of data rates but are still constrained by the available bandwidth in the sub-6 GHz spectrum to meet the data rates required for emerging applications. The millimeter wave (mmWave) band, with a spectrum ranging from 30 GHz to 300 GHz, offers an abundant spectrum and can support data rates of gigabits per second that are envisioned in 5G networks. Significant efforts in the standardisation of mmWave systems, such as IEEE 802.11ad \cite{IEEEStd2012, TWC2017_zhou2017enhanced} and ongoing IEEE 802.11ay \cite{Magazine2017_802.11ay}, are underway, and the 5G networks with mmWave systems are on the path of commercialisation through extensive field trials. 

Though mmWave systems offer higher data rates, they come with a set of challenges—the small wavelengths of mmWaves make them suffer significant attenuation, resulting in a rapid deterioration of signal strength. Thus, unlike traditional terrestrial communication systems, mmWave communication requires highly directional communication with energy focused on narrow beams to achieve the required signal strengths at the receiver. Some challenges in using mmWave systems in mobile communications are highlighted in standards \cite{IEEEStd2012, Magzine2014_80211ad}. However, on the brighter side, small wavelengths allow transmission antennas with small form factors to be packed closely and focus signal energy in specific directions, forming sharp beams.

The other challenge in mmWave communication is that transmitter and receiver beams need to be aligned before data transfer; otherwise, any advantage of high data rates is not realised. A few degrees of misalignment in the beam directions between transmitter and receiver can reduce the data rates from gigabits per second to a few megabits per second, jeopardising the gain from the high spectrum of mmWave systems \cite{INFOCOM2015_mmwavesteering, IEEEStand2012_ieee, INFOCOM2018_EfficientBeamAllignment}. This gives rise to the problem of beam alignment (BA), where one needs to find the best transmitter and receiver beam pair that provides the best rates. The BA problem is critical to building better 5G communication networks with mmWave systems. Our goal in this paper is to learn the best beam pair in a given number of slots with a high probability.

One naive approach to performing BA is an exhaustive search of all available beams at the base station (BS) and user equipment (UE). This strategy does not scale well because it has the complexity of order $\mathcal{O}(K^2)$, where $K$ is the number of beams at BS and UE. The IEEE standard 802.11ad \cite{IEEEStd2012} decouples the BA by performing it in two stages. In the first stage, the BS uses a quasi-omnidirectional beam while the UE scans through its beams to identify the best one. In the next stage, the UE uses a quasi-omnidirectional beam while the BS scans through its beam to identify the best one. This search strategy has a complexity that is linear in $K$. However, as discussed in \cite{SIGMETRICS2015_60GHzIndoor, MobileComputing2014_Demistifying60GHz}, this method can still take time in seconds. 

The initial access (IA) phase in 5G mmWave provides a mechanism to identify beam directions between a BS and UE \cite{TWC2016_InitialAcess, TWC2020_InitialAcessBeamSweeping, INFOCOM2019_InitialAccessSmartlink, TON2018_IntialAcessFastND}. BSs periodically use the IA phase to discover new UEs and check if the best beam for already existing UEs has changed \cite{3GPP_5GStd}. During the IA, BS transmits synchronization signals that UE can measure and report back the received signal strengths (RSS). IA can be used to explore and identify the best pair, as no data is transferred in this phase. Further, BS can adapt to the non-stationary environment by periodically rerunning the IA phase, where the periodicity can depend on the mobility rate and the atmospheric conditions.

We address the BA and user tracking issues in mmWave using the fixed-budget pure exploration Multi-Armed Bandit (MAB) framework, where pure exploration is performed in the IA phase. The BA problem has already been addressed using the MAB framework, which uses cumulative regret minimization algorithms that balance exploration and exploitation to find the best beam pair \cite{INFOCOM2018_EfficientBeamAllignment, hba, Infocom2020_MAMBA}. However, the stopping time in these algorithms is random, leading to the following difficulties: 1) The learning phase may extend beyond the IA phase and reduce the number of time slots available for data transfer. 2) Due to continuous exploration, sub-optimal beams can be used for data transfer, resulting in outages. To overcome these issues, we complete the learning phase within a fixed number of time slots (budget) of the IA phase using adaptive exploration. Moreover, we exploit the structural properties of the RSS across the beams to accelerate the learning process. 

Several studies validate that the RSS of the beams in mmWave systems follows a multi-modal structure, with one peak corresponding to the line-of-sight path and others corresponding to the non-line-of-sight paths \cite{hba, INFOCOM2018_EfficientBeamAllignment}. Often, there is one dominant peak, and the multi-modal functions can be treated as unimodal. Bandits with a unimodal structure are well studied in the literature in the cumulative regret setting with optimal algorithms \cite{ICML2014_UnimodalBandits, ICML2011_UnimodalBandits, Arxiv2020_ExplorationFreeUnimodalBandits}. However, the fixed-budget pure exploration bandit with unimodal structure is not well studied, and optimal algorithms are not known. In this work, we develop a new fixed-budget pure exploration algorithm that exploits the unimodal structure. The new algorithm is named {\it{\ref{algo:UB3}}} and is based on the idea of sequential elimination of sub-optimal beams. {\it{UB3}} achieves an error probability of the order of $\mathcal{O}(\log K \exp(-T_1D^2_L))$ after $T_1$ time slots of IA phase, where $D_L$ is the minimum gap between two successive means of the arms. When no unimodal structure is assumed (unstructured), the best known achievable error probability is $\mathcal{O}(\log K\exp(-T_1/H\log K)) $ \cite{audibert2010best}, where $K$ is the number of beams and $H$ is the problem-dependent constant. Thus, by exploiting unimodal structure, we achieve a better error probability where the error exponent does not depend on the number of beams, and demonstrate that this is anticipated by establishing a lower bound for unimodal bandits. Extensive simulation on realistic wireless networks demonstrates 
{\it{UB3}} identifies the best arm with a probability more than 95\% within $100$ time slots for $16$ beams, while the other state-of-the-art algorithms need $500$ time slots. This translates into throughput gains of more than 15\% compared to other algorithms. Moreover, {\it UB3} does not require any prior knowledge of channel fluctuations. In summary, our contributions are as follows:
\begin{itemize}
    \item We set up the problem of beam alignment in mmWave systems as a fixed-budget pure exploration multi-armed bandit problem in Sec. \ref{sec:setup}. 
    \item We exploit the unimodal structure of the RSS of the beams and develop an algorithm named {\it{\ref{algo:UB3}}} to identify the best beam with a high probability in fixed time slots or within the IA phase of the BA in Sec. \ref{sec:algo}.
    \item We provide an upper bound on the error probability of {\it{UB3}} in identifying the best beam and show that the error exponent does not depend on the number of beams. We demonstrate that is anticipated by establishing a lower bound for unimodal bandits in Sec. \ref{sec:analysis}.
    \item We perform extensive simulations to validate the superior performance of {\it{UB3}} compared to other state-of-the-art algorithms like {\it HOSUB}, {\it{HBA}}, {\it{LSE}} and {\it{Sequential Halving}} in Sec. \ref{sec:experiments}. In agreement with the theoretical bounds, {\it{UB3}} is not affected by the number of beams.
\end{itemize}
\subsection{Related Work}
As there is growing interest in mmWave systems in academia and industry, various aspects of mmWave are being studied. For the recent advances in the mmWave systems, we refer to surveys \cite{Access2020_Survey,Springer2015_mmWaveSurvey,ComSurveys2018_mmWaveSurvey}.

Several approaches are proposed to solve the BA problem. The compressive sensing-based signal processing methods \cite{TSP2016_CpmpressSensingmmWave, Sigcom2018_FastmmWaveAlignment, TSP2019_FastBeamAlignmentSparseCoding} utilize the sparse characterization of mmWave to learn the best beam. They work better when accurate channel state information is available. \cite{TWC2019_ExploreEliminate} proposes an Optimized Two-Stage Search (OTSS) where a suitable candidate set of beams is identified in the first stage based on the received signal profile, and in the second stage, the best beam from the surviving set is selected with additional measurements. Codebook-based hierarchical methods are proposed in \cite{JSAC2009_CodeBasedBeamforming, TWC2016_CodeBookBeaming, INFOCOM2018_LinearBlockCodingMillimeterWave} which also require channel state information for BA. \cite{Access2019_PositionAidedmmWaveBeamTraining, INFOCOM2018_FastMachineLeaning} utilizes the location information to perform fast BA, which is feasible only when the location information of the UE is available at the BS. \cite{GlobalSIP2016_BeamTracking, ICC2016_TrackingAnlgesofDepartureandArrival} use Kalman filters to detect the angles of arrivals and departures to track UE. Recently, machine learning \cite{TCCN2020_MachineLearningAssisted}, and deep learning \cite{ICASSP2020_DeepLearningBeamAlignment, TWC2022_DeepReinforcementLearning} methods have been used for BA, which requires offline training of the models. 

Our work is closer to \cite{INFOCOM2018_EfficientBeamAllignment, Infocom2020_MAMBA, hba, ICC2021_HOSUB} which uses an online learning approach to optimize the BA problem. The authors in \cite{INFOCOM2018_EfficientBeamAllignment} develop an algorithm named {\it{Unimodal Beam Alignment (UBA)}} that exploits the unimodal structure of received power. The algorithm is built on the OSUB algorithm \cite{ICML2014_UnimodalBandits} by adding stopping criteria. The algorithm assumes that the mean powers are known, and the stopping criteria is based on these mean powers. The {\it{Hierarchical Beam Alignment (HBA)}} \cite{hba} algorithm also exploits the unimodal/multimodal structures of beam signal strengths to narrow down on the optimal beam. {\it HBA} has shown better performance for beam identification than {\it UBA}. However, for $T$ time slots, the computational complexity of {\it{HBA}} is $O(T^2)$, as in each time slot, the algorithm restarts the search, making the running time linear in each time slot. Moreover, {\it HBA} requires knowledge of channel fluctuations, which is not practical. The {\it Hierarchical Optimal Sampling of Unimodal Bandits (HOSUB)} \cite{ICC2021_HOSUB} exploits the benefits of hierarchical codebooks
and the unimodality structure of the beam signal strengths to achieve fast
beam steering. Simulations show better performance of {\it HOSUB} compared to {\it HBA}, as well as a large reduction in computational complexity. However, the authors in \cite{ICC2021_HOSUB} did not provide any theoretical guarantees on their proposed algorithm. The authors in \cite{Infocom2020_MAMBA} develop an algorithm named {\it{MAMBA}} that aims to maximize the cumulative rate obtained over a period using the Thompson sampling algorithm named {\it{Adaptive Thompson Sampling (ATS)}}. Unlike the {\it UBA} and {\it HBA}, the exploration never ends in {\it ATS} and may keep selecting the sub-optimal beams. 

Our work develops an online learning algorithm for BA using a fixed-budget pure exploration setup \cite{audibert2010best,JMLR2016_ComlexityofBestArmIdentification}. We exploit the unimodal structure of beam RSS to eliminate the sub-optimal beams and narrow the beam search space quickly. Fixed-budget pure exploration strategies are more suitable for the BA problem, as the exploration can be completed in the IA phase, and no exploration is required during the data transfer, simplifying the design of protocols. To our knowledge, this has not been studied in 5G networks with mmWave systems. 

\section{Problem Setup}\label{sec:setup}
In this section, we discuss the system and channel model used in mmWave system. We follow the setup and notation similar to \cite{hba}.

\subsection{System Model}
We consider a point-to-point mmWave wireless system between a transmitter, refer as mmWave BS, and a receiver, refer as mmWave UE, in a static environment as shown in Fig.~\ref{fig:System_model}. We focus on analog beamforming and consider one ADC with an RF chain that focuses on one direction at a time. The transmitter has $K$ phased-array antennas, where each antenna has a phase shift to form a narrow directional beam. The antennas are evenly spaced by a distance $D\approx \lambda/2$ forming a uniform linear array, where $\lambda$ is the carrier wavelength. As the IEEE 802.11ad can decouple the BA, we consider that the receiver keeps a quasi-omnidirectional beam and the transmitter scans over the beam space to identify the best beam. This is a reasonable assumption due to the small form factor and fewer antennas on UEs. In the following, we focus on beam alignment (BA) at the BS side, and the extension that involves both BS and UE is straightforward. 
\begin{figure}
	\centering
\includegraphics[scale=.42]{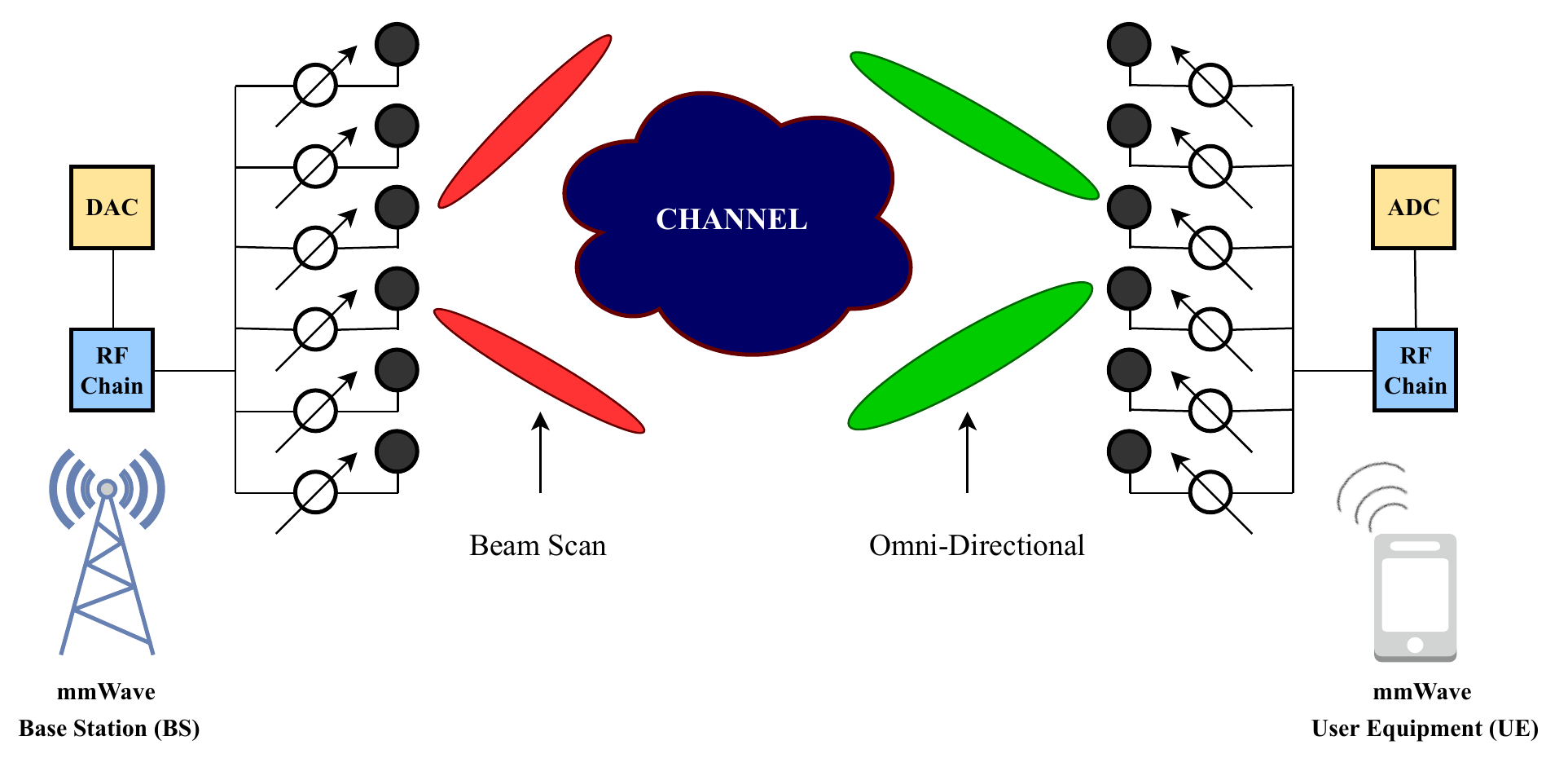}
	\caption{\small{A point-to-point mmWave Communication System.}}
	\label{fig:System_model}
\end{figure}
\begin{figure*}[t]
	\centering	\includegraphics[scale=1.5]{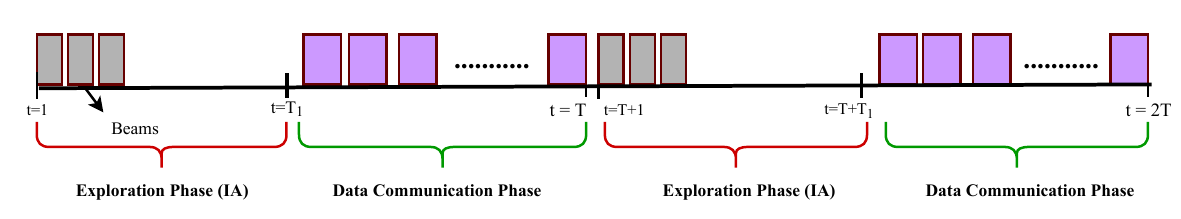}
    	\caption{\small{Beam exploration (IA) phase followed by data transfer phase.}}
	\label{fig:BA}
\end{figure*}

\subsection{Channel Model}
Due to the sparse characteristics of mmWave channel, we consider Saleh-Valenzuela channel model \cite{JSAC2014_MillimeterWaveChannelModel}. Suppose there are $L$ paths where one is the dominant line-of-sight (LOS) path and $L-1$ non-line-of-sight (NLOS) paths. Let $\mathbb{C}$ denote the set of complex numbers. The channel vector between the transmitter and the receiver is given by
\begin{align}
    {{\bf h}} = g_0{{\bf a}(v_0)} + \sum\limits_{l=1}^{L-1}g_l{{\bf a}(v_l)} \in \mathbb{C}^{K\times 1}
\end{align}
where ${\bf{a}(v)}=\left\{e^{j\frac{2\pi D}{\lambda}k v}: 0\leq k \leq K-1 \right\} \in \mathbb{C}^{K\times 1}$ denote the vector of sinusoids at spatial angle $v$, $g_0$ is the channel gain of LOS path, and $g_l, 1\leq l\leq L-1$ is the channel gains of $l^{th}$ NLOS path. $v=\cos \theta$ denotes the spatial angle of the channel associated with physical angle $\theta$. As assume that the channel remains static for a duration $T$ time slots and the channel vector keeps invariant in the BA process. 

Let $\mathbf{B} \in \mathcal{C}^{K\times K}$ denote the unitary discrete Fourier transform (DFT)  of the transmit beam space, where the $k^{th}$ column corresponds to the $k^{th}$ beam, i.e.,  
\begin{align}
{\mathbf{B}} &= [\bf{b_1},\bf{b_2},\dots,\bf{b_K}] =\frac{1}{\sqrt{K}}[{\bf a}(w_1),{\bf a}(w_2),\dots,{\bf a}(w_K)]\label{eqn:BFM}
\end{align}
where $w_k = \frac{2k - K}{K}$ denotes the spatial angle of the $k^{th}$ beam. Then, the received signal of $k^{th}$ beam (with receiver omnidirectional) is given by
\begin{align}
    y_k = \sqrt{P}{\bf{h}}^H{\bf{b_k}} + n,\label{eqn:received_signal}
\end{align}
where $n$ denotes the additive white Gaussian noise with mean noise power $N_0W$, with  noise power density $N_0$ and channel bandwidth $W$. We denote the received signal strength (RSS) of the $k^{th}$ beam as $r_k=|y_k|^2$ and denotes its mean value as $\mu_k = \EE{r_k}$.  Let $k^*=\argmax_k \mu_k$. Then the beam $b_{k^*}$ denotes the optimal beam with the best mean RSS. 

\begin{defi}{(Unimodality):}
The unimodality structure
indicates that, $\forall b_k \in  {\bf{B}},$ there exist a path such that $\mu_1 < \mu_2 <\dots< \mu_{k^*}$ and $\mu_{k^*} > \mu_{K^*+1} >\dots> \mu_K.$
\end{defi}
For the case when only LOS path is present with channel gain $g$ and spatial angle $v$, mean RSS is given as $\mu_{k} = \frac{Pg^2}{K}\delta(w_k - v) + N_o W$ for all $b_k \in {\bf{B}}$, where $\delta(x) =\frac{\sin^2(K\pi Dx/\lambda)}{\sin^2(\pi Dx/\lambda}$ denotes the antenna directivity function for angular misalignment $x$. For $b_{k}\in \mathbf{B}$, $\mu_{k}$ is a function of angular misalignment $w_k-v$ and has unimodal property \cite[Thm. 1]{hba}. In the following, we only consider the beams with the unimodal property as in \cite{INFOCOM2018_EfficientBeamAllignment} and later discuss how to extend our method to the multimodal case involving multiple NLOS of paths.

\begin{rem}
We assumed that the Modulation and Coding Scheme (MCS) on each beam is fixed. However, we can easily accommodate different MCS on each beam by treating the RSS as a vector corresponding to MCS and considering the mean rate as done in \cite{Infocom2020_MAMBA}.
\end{rem}

\subsection{Problem Formulation}
We assume a slotted system where the length of the IA phase is $T_1$ time slot. As specified in the 5G standards, we let the BS rerun IA phase periodically after every $T$ time slot where $T>T_1$. We assume that during the period $T$, the environment is stationary such that the best beam remains the same. We focus on one period between two successive reruns of the IA phase and index the time slots in that period as $t=1$ to $t=T$, refer to Fig. \ref{fig:BA}. In each time $t$, the BS can select one of the beams from set $\mathbf{B}$ and obtain as feedback the RSS at the receiver. The feedback is obtained through ACK/NACK sent back by the receiver -- BS measures the signal strength of the received ACK/NACK, which gives the RSS at the receiver \cite{Infocom2020_MAMBA}. During the IA duration of $T_1$ time slots, no information signals are transmitted and throughput or error probability is not a concern. However, during the period of $T-T_1$ data is transmitted and it is desirable to obtain high throughput in this period using the best possible beam. We are thus interested in algorithms that output an optimal beam at the end of $T_1$ time slots (IA phase).

We model the problem as a fixed-budget pure exploration multi-armed bandit \cite{audibert2010best, ICML2012_PACSubsetSelection} where the goal is to identify the optimal arm within a fixed budget with high confidence. Following the terminology of multi-armed bandits, we refer to beams as arms. A policy is any strategy that selects an arm in each time slot given the past observations. 
Let $k_t \in \mathbf{B}$ denotes the arm selected by a policy at time $t$. By playing an arm $k_t$, the policy observes the feedback $r_{k_t}$ which is a noisy RSS. The choice of $k_t$ can depend on the beams selected in the past and their associated RSS values. We assume that RSS values observed in each time slot are independently and identically distributed across the arms and time. The distribution of RSS is governed by channel fluctuations, such as shadow fading and the disturbance effect, and follows unknown fixed distributions within a time period $T$.  However, without loss of generality, we assume that values of RSS are bounded in some interval.

For a given policy $\pi$, let $\hat{k}_{T_1}^\pi$ denote the index of the arm output by $\pi$ at the end of $T_1$ time slots. Let $\Pi$ denote the set of all policies of pure-exploration that output algorithms within a fixed budget of $T_1$. Then our goal is to find a policy in $\Pi$ that minimizes the probability that the arm output at the end of $T_1$ is not an optimal arm, i.e., 

$$\min_{\pi \in \Pi}\Pr \left (b_{\hat{k}_{T_1}}^\pi \neq b_{k^*}\right ),$$
where for each policy, $\Pr(\cdot)$ is calculated with respect to the samples induced by the policy.  We note that our criteria is different from those set in {\it{UBA}} \cite{INFOCOM2018_EfficientBeamAllignment} and {\it{HBA}} \cite{hba} which aim to minimize cumulative regret. Though both {\it UBA} and {\it HBA} have stopping criteria beyond which they play a fixed arm, the stopping time can be random, making their practical implementation challenging. Whereas, the policy considered by us completes the exploration phase deterministically after time $T_1$  which makes their implementation easier in a wireless setup. Fig.~\ref{fig:BA} depicts the structure of our policy.

\section{Algorithms} 
\label{sec:algo}
 In this section, we propose an algorithm named {\it UB3-Unimodal Bandit for Best Beam} that finds the optimal beam after exploiting the unimodal structure of mean RSS within $T_1$ time slots. The algorithm is based on the {\it{Line Search Elimination (LSE)}} algorithm developed in \cite{ICML2011_UnimodalBandits}, where the algorithm samples and eliminate arms in multiple phases till one arm survives after $L+1$ phases. The pseudo-code of {\it UB3} is given in \ref{algo:UB3}. It is parameter free and only takes $K$ and $T_1$ as inputs. 
 
 {\it UB3} runs in $L+1$ phases. Arms are sampled and eliminated in each phase such that only one arm survives after the $L+1$ phase. We first explain the number of rounds in each phase. Let $N_l$, for $l=1,2,\ldots, L+1$, denotes the number of samples in phase $l$. Then,
\begin{equation}
\label{eqn:Nl}
    N_l=\begin{cases}
        \frac{2^{L-2}}{3^{L-1}}T_1 & \mbox{ for  } l=1,2\\
        \frac{2^{L-(l-1)}}{3^{L-(l-2)}}T_1 & \mbox{ for  }  l=3,4,\ldots, L+1
    \end{cases}
\end{equation}
which satisfies that
\begin{align}
\label{eqn:L_T}
    2\times \frac{2^{L-2}T_1}{3^{L-1}} + \sum_{l=3}^{L+1}\frac{2^{L-(l-1)}T_1}{3^{L-(l-2)}} = T_1.
\end{align}
After the first two phases, the number of samples increases by a factor of $3/2$ in each subsequent phase. This increase in the number of samples helps to distinguish between the empirical means of the remaining arms, which are likely to be closer.

{\it UB3} works as follows. Let ${\bf{{B}_l}}=\{b_1,b_2,\dots,b_l\}$ denote the set of arms available in phase $l$ and $j_l := |{\bf{{B}_l}}|$ is the number of arms in the set ${\bf{{B}_l}}$. In phase $l=1,2,\ldots L$, the algorithm selects four arms $\{k^M, k^A, k^B, k^N\} \in {\bf B_l}$, which include the two extremes and two middle arms uniformly spaced from them (lines 4-7). Each of the arms is sampled for $\frac{N_l}{4}$ number of times (line 8). At the end of the phase, their empirical means denoted $\hat{\mu}_i$ (line 9) are obtained as follows:
\begin{equation}
\label{eqn:empmeans}
    \hat{\mu}^l_i = \frac{1}{N_l/4}\sum\limits_{s=1}^{N_l/4}r^l_{i_s},\quad  \forall i \in \{k^M, k^A, k^B, k^N\}
\end{equation}
where $r^l_{i_s}$ denotes  the $s^{th}$ noisy RSS sample from $i^{th}$ arm in phase $l$. Based on these empirical means, we eliminate at most $1/3^{rd}$ of the number of arms from the remaining set\footnote{If the number of arms in a phase is not a multiple of $4$, then less than $1/3^{rd}$ will be eliminated in that phase.}. More specifically, if the arms $k^M$ or $k^A$ has the highest empirical means, then we eliminate all the arms succeeding $k^B$ in the set ${\bf B_l}$ (line 11 and 12).  Similarly, if the arms $k^B$ or $k^N$ has the highest empirical means, then we eliminate all the arms preceding $k^A$ in the set ${\bf B_l}$ (line 13 and 14). Fig.~\ref{fig:AlgorithmPic} gives a pictorial representation of the elimination of arms in two possible cases. The remaining set of arms are then transferred to the next phase. In phase $L+1$, we are left with three arms. Each one of them is sampled $\frac{N_{L+1}}{3}$ number of times and the one with the highest empirical mean is the output of the algorithm as the optimal arm (lines 18-23).
\begin{figure}
\vspace{-3mm}
    \centering
    \includegraphics[scale=0.55]{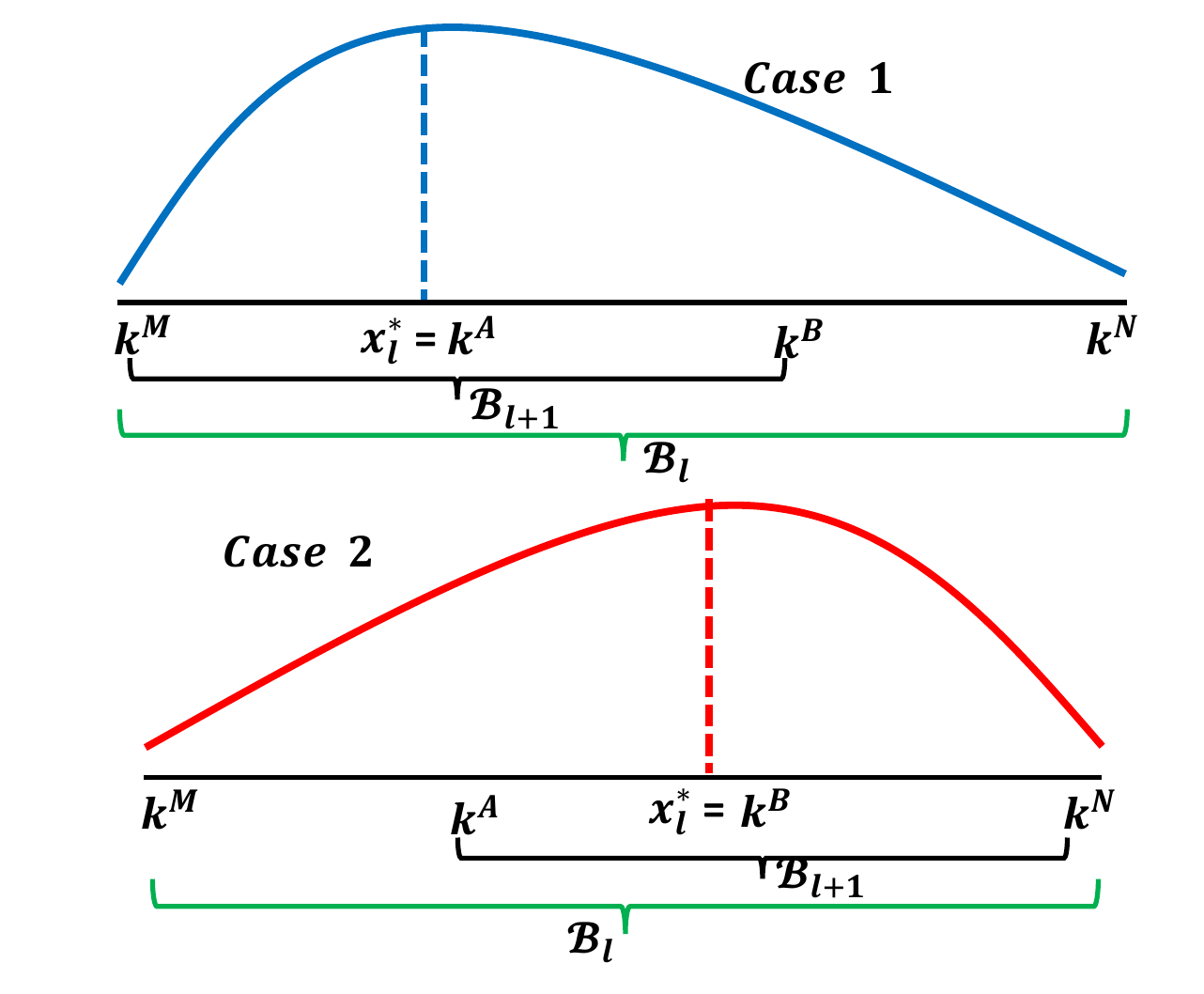}
    \caption{Different cases of elimination in phase $l$.}
    \label{fig:AlgorithmPic}
\end{figure}
\begin{algorithm}
	\renewcommand{\thealgorithm}{Unimodal Bandit for Best Beam (UB3)}
	\floatname{algorithm}{}
	\caption{\bf }
	\label{algo:UB3}
    \begin{algorithmic}[1]
        \STATE \textbf{Input:} $T_1$ and $K.$
        \STATE \textbf{Initialise:} ${\bf{{B}_1}} = {\bf B},$ $j_1\leftarrow |{\bf{{B}_1}}|$. Calculate L from \eqref{eq:alg21}.
         \FOR{$l= 1$ to $L$}
            \STATE $k^M\leftarrow$ First arm of ${\bf B_l}$;
            \STATE $k^N\leftarrow$ Last arm of ${\bf B_l}$;
              \STATE $k^A \leftarrow \lceil j_l/3\rceil^{th}$ arm of ${\bf B_l}$;
              \STATE $k^B \leftarrow \lfloor 2j_l/3\rfloor^{th}$ arm of ${\bf B_l}$;
                \STATE Sample each arm in $\{k^M, k^A, k^B, k^N\}$ for $\frac{N_l}{4}$ number of times from \eqref{eqn:Nl}
                \STATE Obtain $\hat{\mu}^l_{k^M}$, $\hat{\mu}^l_{k^A}$, $\hat{\mu}^l_{k^B}$, $\hat{\mu}^l_{k^N}$ by \eqref{eqn:empmeans}.
                \STATE $x^*_l = \argmax\limits_{i \in \{k^M, k^A, k^B, k^N\}} \hat{\mu}_i.$
                \IF{$x^*_l == \{k^M, k^A\}$}
                    \STATE ${\bf B_{l+1}} \leftarrow\{k \in {\bf B_l}: k^M \leq k \leq k^B\}$ Shrink to left
                \ELSIF{$x^*_l == \{k^B, k^N\}$}
                     \STATE ${\bf B_{l+1}}$ $\leftarrow\{k \in {\bf B_l}: k^A \leq k \leq k^N\}$ Shrink to right
                \ENDIF
                \STATE $j_{l+1} \leftarrow |{\bf B_{l+1}}|;$
               \ENDFOR
        \FOR{$l=L+1$}
            \STATE ${\bf B_{L+1}} = \{k^M, k^A, k^N\}$;
            \STATE Sample each arm in $\{k^M, k^A, k^N\}$ for $\frac{T_1}{9}$ no. of times.\\
            \STATE Obtain $\hat{k}_{L+1}=\argmax\limits_{i \in \{k^M, k^A , k^N\}}\hat{\mu}^{L+1}_i.$
        \ENDFOR
       \STATE \textbf{Output:} $b_{\hat{k}_{T_1}} = \hat{k}_{L+1}$
    \end{algorithmic}
\end{algorithm}

\begin{rem}\label{rem:elimination}
Arms between $k^M\;\& \;k^A$ or $k^B\; \& \; k^N$ are eliminated in phase, and the arms between $k^A \;\& \;k^B$ always survive.
\end{rem}
After phase $l=1,2,\ldots, L$, $\lfloor\frac{2}{3}j_l\rfloor$ of the arms survive. For ease of exposition, we will drop the $\lfloor\rfloor$ function since this drop will influence only a few constants in the analysis. Thus, after the end of $L$ phases there will be three arms as 
\begin{align}
\label{eq:alg21}
     \left(2/3\right)^{L}K =3 \implies L =\frac{\log_2 K/3}{\log_2 3/2}.
\end{align}
Therefore, the {\it{UB3}} outputs the best beam as $\hat{k}_{L+1}$  which is equivalent to $b_{\hat{k}_{T_1}}$ after exploring for $T_1$ time slots. In the next section, we upper bound the error probability of {\it{UB3}}.

\section{Analysis}
\label{sec:analysis}
In this analysis, we find an upper bound and lower bound for the probability of best arm elimination for fixed-budget pure exploration bandit with unimodal structure. We first upper bound the error probability of  {\it{\ref{algo:UB3}}}. For analysis, we use the following assumption:
\begin{assu}\label{assump:dl}
 There exists a constant $D_L>0$ such that $|\mu_k-\mu_{k-1}|\ge D_L$ for $2\le k\le K$.
\end{assu} 

We note that this assumption is the same as that used in from \cite[Assumption 3.4]{ICML2011_UnimodalBandits} to analyze unimodal bandits in the regret minimization setting.

\subsection{Upper Bound for Algorithm UB3}
\begin{thm}
\label{thm:UB3_upper}
Let {\it UB3} is run for $T_1$ time slots in $L+1$ number of phases, where $L=\frac{\log_2 K/3}{\log_2 3/2}$ with output $\hat{k}_{L+1}$. Then, the probability that $\hat{k}_{L+1}$ output by {\it UB3} is not the best arm after $L+1$ phases is bounded as 
\begin{align}
\label{eqn:UpperboundUB3}
    & P(\hat{k}_{L+1}\ne k^*)\le  2\exp\left\{-\frac{T_1}{18}D_L^2\right\} + 2\exp{\left\{-\frac{T_1 K}{32}D_L^2\right\}}\nonumber\\
     &+2\exp\left\{-\frac{T_1 K}{72}D_L^2\right\} +2(L-2)\exp\left\{-\frac{T_1}{16}D_L^2\right\}.
\end{align}
\end{thm}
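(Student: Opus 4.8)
The plan is to control the probability that the optimal arm $k^*$ is ever discarded during the $L$ elimination phases, and then add the probability of an incorrect final decision in phase $L+1$, via a union bound over phases. Since each phase uses fresh i.i.d.\ samples and $k^*$ can be eliminated at most once, I would start from
\[
P(\hat{k}_{L+1}\ne k^*)\le \sum_{l=1}^{L} P\bigl(k^*\text{ eliminated in phase }l\bigr) + P\bigl(\text{phase }L+1\text{ errs}\bigr),
\]
where each summand is evaluated on the event that $k^*$ is still present at the start of phase $l$. The bound will follow by estimating each term separately and substituting the explicit sample counts from \eqref{eqn:Nl} together with $L$ from \eqref{eq:alg21}.

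The key step is characterizing the per-phase error event. By Remark~\ref{rem:elimination}, the arms lying between $k^A$ and $k^B$ always survive, so in phase $l$ the arm $k^*$ is discarded only in two symmetric situations: either $k^*<k^A$ while the algorithm shrinks right ($x^*_l\in\{k^B,k^N\}$), or $k^*>k^B$ while it shrinks left. Consider the first. Since $k^A,k^B,k^N$ then all lie on the decreasing side of the peak, unimodality gives $\mu_{k^A}>\mu_{k^B}>\mu_{k^N}$, so the error event is contained in $\{\hat\mu^l_{k^B}\ge\hat\mu^l_{k^A}\}\cup\{\hat\mu^l_{k^N}\ge\hat\mu^l_{k^A}\}$. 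By Assumption~\ref{assump:dl} the relevant gap telescopes, $\mu_{k^A}-\mu_{k^B}=\sum_{k=k^A+1}^{k^B}(\mu_{k-1}-\mu_k)\ge (k^B-k^A)D_L$, and similarly for $k^N$, with $k^B-k^A$ of order $j_l/3$. The crucial observation is that this gap lower bound is uniform over the (history-dependent) position of $k^*$, so I need not track the exact configuration that reaches phase $l$.

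Next I would apply concentration to each pairwise comparison. Because the RSS values are bounded, $\hat\mu^l_{k^B}-\hat\mu^l_{k^A}$ is an average of $n_l:=N_l/4$ i.i.d.\ terms of range $2$ with negative mean $-(\mu_{k^A}-\mu_{k^B})$, so Hoeffding yields $P(\hat\mu^l_{k^B}\ge\hat\mu^l_{k^A})\le \exp\{-n_l(\mu_{k^A}-\mu_{k^B})^2/2\}\le \exp\{-n_l(j_l/3)^2 D_L^2/2\}$, and likewise for the $k^N$ comparison. Hence each phase $l\le L$ contributes at most $2\exp\{-n_l(j_l/3)^2 D_L^2/2\}$, and the symmetric case $k^*>k^B$ gives the same bound, so no extra factor arises.

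Finally I would substitute the explicit values. Using $(2/3)^LK=3$, one gets $n_1=n_2=9T_1/(16K)$ with $j_1=K$ and $j_2=2K/3$, which collapses the phase-$1$ and phase-$2$ bounds to $2\exp\{-T_1KD_L^2/32\}$ and $2\exp\{-T_1KD_L^2/72\}$. For $3\le l\le L$ the same substitution gives exponent $\tfrac{1}{16}(3/2)^{L-l}T_1D_L^2$, which is smallest at $l=L$; bounding each of these $L-2$ phases by its worst case produces $2(L-2)\exp\{-T_1D_L^2/16\}$. In phase $L+1$ the three surviving consecutive arms are each sampled $T_1/9$ times, and every competitor of $k^*$ is at ordering-distance at least one on the decreasing side, so both relevant gaps are $\ge D_L$ and the same Hoeffding step gives $2\exp\{-T_1D_L^2/18\}$. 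Summing the four contributions yields \eqref{eqn:UpperboundUB3}. I expect the main obstacle to be the second step — cleanly reducing the discrete elimination rule, combined with unimodality, to a pairwise empirical-mean comparison whose mean gap is bounded \emph{uniformly} in the unknown prior history — rather than the concentration inequality or the arithmetic, both of which are routine once that gap is isolated.
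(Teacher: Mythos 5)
Your proposal is correct and follows essentially the same route as the paper's proof: a union bound over phases, the observation (via Remark~\ref{rem:elimination}) that $k^*$ can only be eliminated when it lies outside $[k^A,k^B]$, a reduction to pairwise empirical-mean comparisons whose gap is at least $(j_l/3)D_L$ by unimodality and Assumption~\ref{assump:dl}, Hoeffding's inequality, and the same substitutions of $N_l$ and $j_l$ yielding the four terms of \eqref{eqn:UpperboundUB3}. The only (cosmetic) difference is that you bound the two bad events $\{\hat\mu^l_{k^B}\ge\hat\mu^l_{k^A}\}$ and $\{\hat\mu^l_{k^N}\ge\hat\mu^l_{k^A}\}$ separately using the larger gap for $k^N$, whereas the paper folds the $k^N$ event into the $k^B$ event via $\mu_{k^B}\ge\mu_{k^N}$; both give the same factor of $2$.
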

\begin{proof}
The proof is given in Appendix A. 
\end{proof}

Observe that the dominant first and last terms in the upper bound do not depend on $K$. The error probability is thus of order $\mathcal{O}\left(\log_2 K\exp\left\{-\frac{T_1D_L^2}{16}\right\}\right)$, where the error exponent term $\left(\exp\left\{-\frac{T_1D_L^2}{16}\right\}\right)$ does not depend on $K$. The {\it Sequential Halving (Seq. Halv.)} algorithm is proposed in \cite{seqhalving} for non-unimodal (unstructured) bandits and provided an upper bound for the probability of not choosing the optimal arm. The error probability is shown to be  $O\left(\log_2(K)\exp\left\{-\frac{T_1}{\log(K)H_2}\right\}\right)$. This bound matches with the lower bound derived in \cite{carpentier2016tight} for unstructured bandits up to multiplicative factor of $\log_2(K)$, where $H_2 = \max\limits_{k \neq k^*}\frac{k}{\Delta_k^2}$ is the complexity parameter depended upon the sub-optimality gap. Note that the error exponent in this bound of {\it Seq. Halv.} has $\log_2(K)$ factor. By exploiting the unimodal property, we shove off this factor. We next consider the lower bound for fixed-budget pure exploration with the unimodal structure which confirms that error exponent should be indeed independent of number of beams for any optimal algorithm.

\subsection{Lower bound for pure exploration unimodal bandit}
A lower bound on the probability of error for the fixed budget without assuming any structure is established in \cite{carpentier2016tight}. We adapt the proof to include the unimodal structure to derive a lower bound. To this end, we first define a set of bandit instances as follows.

Let us consider $K$ arms that follow the unimodal structure. Let $\{p_k\}_{1\leq k \leq K}$ be $K$ real numbers in the interval $[1/4,1/2]$ with $p_{k^*} = \frac{1}{2}$ and $p_1 \leq p_2 \leq \dots \leq p_{k^*-1} \leq p_{k^*} \geq p_{k^*+1} \geq \dots \geq p_K.$ For any $1 \leq k \leq K$ we consider the distribution $\nu_k$ to be Bernoulli distribution with mean $p_k$, i.e. $\nu_k := \text{Ber}(p_k)$. We consider another distribution $\nu'_k$ as Bernoulli distribution with mean $1-p_k$, i.e., $\nu'_k := \text{Ber}(1-p_k)$ with mean $1-p_k$. 

We define $K$ bandit problem as following. For $i \in \{1,\dots,K\}$ define the product distributions $\mathcal{G}^i := \nu_1^i \otimes \nu_2^i \otimes \dots \otimes \nu_K^i$ where 
\begin{align*}
    \nu_k^i :=    \begin{cases}
    &\nu_k 1\{k \neq i\} + \nu'_k 1\{k=i\},  \text{ if } k\in \{k^*-1,\\
    &\hspace{45mm} k^*,k^*+1\}\\
    &\nu_k,  \text{ otherwise}\end{cases}
\end{align*}
where $1\{A\}$ denotes the indicator function.
It is easy to note that only bandit instances $\mathcal{G}^i$, where $i \in \{k^*-1, k^*, k^*+1\}$ satisfy the unimodality structure and the not the others. Flipping of the reward for all other arms will result in a non-unimodal problem. Thus unlike \cite{carpentier2016tight} we have 3 bandit problems in the neighbourhood of $k^*$.

We define $d_k := p_{k^*} - p_k = \frac{1}{2} - p_k,$ for any $1 \leq k \leq K$. Set $\Delta^i_k = d_i + d_k, \text{ if } k \neq i \text{ and } \Delta^i_i = d_i,$ for any $i \in \{k^*-1, k^*+1\}$ and any $k \in  \{1,\dots,K\}.$ Note that $\{\Delta^i_k\}_k$ denotes the arm gaps of the bandit problem $i$. We also define for any $i \in \{k^*-1, k^*+1\}$ the quantity $$\bar{H}(i) := \sum\limits_{k \in \{i-1,i+1\}}\frac{1}{(\Delta^i_k)^2} \text{ and } \bar{h} = \sum\limits_{i \in \{k^*-1, k^*+1\}} \frac{1}{d^2_i \bar{H}'(i)}.$$ 
Theorem 2 from \cite{carpentier2016tight} can be rephrased in this setting as follows.
\begin{thm}
\label{thm:unimodal_lower}
For any bandit strategy that returns the arm $\hat{k}_{T_1}$ at time $T_1$, it holds that
\begin{align}\label{eq:thm1}
    \max\limits_{i \in \{k^*-1, k^*+1\}} &P_i(\hat{k}_{T_1} \neq i) \nonumber\\
    &\geq \frac{1}{6}\exp\left(-60\frac{T_1}{\bar{H}(k^*)} - 2\sqrt{T_1\log(18T_1)}\right),
\end{align}
and also 
\begin{align}\label{eq:thm2}
    \max\limits_{i \in \{k^*-1, k^*+1\}} &P_i(\hat{k}_{T_1} \neq i) \nonumber\\
    &\geq \frac{1}{6}\exp\left(-60\frac{T_1}{\bar{h}\bar{H}(i)} - 2\sqrt{T_1\log(18T_1)}\right).
\end{align}
\end{thm}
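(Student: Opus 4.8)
The plan is to transcribe the change-of-measure lower bound of Carpentier and Locatelli \cite{carpentier2016tight} to the present unimodal setting, where the decisive simplification is that unimodality restricts the family of confusing instances to a constant-size neighbourhood of $k^*$. Concretely, among the problems $\mathcal{G}^i$ only $\mathcal{G}^{k^*-1}$, $\mathcal{G}^{k^*}$, and $\mathcal{G}^{k^*+1}$ remain unimodal, since raising the mean of any arm outside $\{k^*-1,k^*,k^*+1\}$ above $1/2$ creates a spurious second mode; thus the effective hypothesis class has size three rather than $K$. This is exactly the structural fact that removes the dependence on $K$ from the error exponent, mirroring the $K$-independent upper bound of Theorem~\ref{thm:UB3_upper}.

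First I would fix an alternative $i\in\{k^*-1,k^*+1\}$, whose best arm is $i$ rather than $k^*$, and write the log-likelihood ratio of the observed trajectory under $\mathcal{G}^i$ versus the base instance $\mathcal{G}^{k^*}$. Because the two product measures agree on every coordinate except arm $i$ (where $\nu_i=\mathrm{Ber}(p_i)$ is replaced by $\nu'_i=\mathrm{Ber}(1-p_i)$), this log-ratio is a sum over the $T_i$ pulls of arm $i$ whose expectation is $T_i\,\mathrm{KL}(\nu_i,\nu'_i)$. Using $p_k\in[1/4,1/2]$ to bound the Bernoulli divergence above by a universal constant times $(\Delta^i_k)^2$ is what makes the complexity quantities $\bar H(k^*)$, $\bar H(i)$, and $\bar h$ appear: they collect the reciprocals $1/(\Delta^i_k)^2$ over the critical arms and are precisely the budgets needed to separate the competing modes.

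Next I would apply the core change-of-measure inequality on a favourable event $\xi_i$ on which the pull count $T_i$ does not exceed a threshold $a_i$; on $\xi_i$ the accumulated divergence is at most $a_i\,\mathrm{KL}(\nu_i,\nu'_i)$, so a Bretagnolle--Huber--type estimate yields $\max\{P_{k^*}(\hat k_{T_1}=i),\,P_i(\hat k_{T_1}\neq i)\}\ge \tfrac12\exp(-a_i\,\mathrm{KL}(\nu_i,\nu'_i))$ up to the cost of $P_{k^*}(T_i>a_i)$. The budget constraint $\sum_k T_k=T_1$ then enters through a weighted (harmonic-mean) averaging over the two critical arms: since distinguishing alternative $i$ costs of order $1/(\Delta^i_i)^2$ samples, the combination $\sum_i 1/(\Delta^i_i)^2=\bar H(k^*)$ forces some $i$ to be separable with a threshold of order $T_1/\bar H(k^*)$, whose favourable event then has non-negligible probability under $P_{k^*}$. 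Maximizing over $i\in\{k^*-1,k^*+1\}$ gives the rate $T_1/\bar H(k^*)$ of \eqref{eq:thm1}; allocating the budget non-uniformly across the two alternatives---weighting by $d_i$---yields instead the reweighted rate $T_1/(\bar h\,\bar H(i))$ of \eqref{eq:thm2}. The additive $2\sqrt{T_1\log(18T_1)}$ term and the prefactor $1/6$ arise, respectively, from a concentration bound on the fluctuation of $T_i$ around its threshold and from the union over the three hypotheses.

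The main obstacle I anticipate is constant bookkeeping rather than any new idea: one must optimize the thresholds $a_i$ against the squared-gap divergence bounds so that the universal factor $60$ emerges, verify the Bernoulli-divergence estimate uniformly over $[1/4,1/2]$, and carry the count-concentration argument to extract exactly $2\sqrt{T_1\log(18T_1)}$. Once the hypothesis class is pinned down to the three unimodal instances, the remainder is a faithful reindexing of the Carpentier--Locatelli estimates with all sums ranging over $\{k^*-1,k^*+1\}$ instead of all $K$ arms; the conceptual payoff is entirely in the observation that unimodality collapses the confusing family to constant size.
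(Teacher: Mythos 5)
Your proposal takes essentially the same approach as the paper: the paper's entire proof of this theorem is the remark that, by unimodality, only the instances $\mathcal{G}^{k^*-1}$, $\mathcal{G}^{k^*}$, $\mathcal{G}^{k^*+1}$ remain admissible, after which the change-of-measure argument of Theorem 2 in \cite{carpentier2016tight} is applied verbatim on this restricted set of arms (the paper explicitly skips the details). Your sketch is a faithful expansion of exactly that plan, including the key structural observation that unimodality collapses the confusing family to constant size, which is what removes the $K$-dependence from the exponent.
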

The proof of this theorem follows the lines similar to \cite[Thm. 2]{carpentier2016tight} after applying the change of measure rule to on the restricted set of arms. We skip the details.

\begin{cor}
\label{cor:unimodal_lower_cor}
Assume that 
$$T_1 \geq \max\bigg(\bar{H}(k^*),\bar{H}(i)\bar{h}\bigg)^2\frac{4 \log(6T_1K)}{(60)^2}.$$ For any bandit strategy
that returns the arm $\hat{k}_{T_1}$ at time $T_1$, it holds that 
$$\max\limits_{i \in \{k^*-1, k^*+1\}}P_i(\hat{k}_{T_1} \neq i) \geq \frac{1}{6}\exp\left(-120\frac{T_1}{\bar{H}(k^*)}\right),$$
and also $$\max\limits_{i \in \{k^*-1, k^*+1\}}P_i(\hat{k}_{T_1} \neq i) \geq \frac{1}{6}\exp\left(-120\frac{T_1}{\bar{H}(i)\bar{h}}\right).$$
\end{cor}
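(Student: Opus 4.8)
The plan is to obtain Corollary~\ref{cor:unimodal_lower_cor} directly from Theorem~\ref{thm:unimodal_lower} by showing that, under the stated lower bound on $T_1$, the additive penalty $2\sqrt{T_1\log(18T_1)}$ inside each exponential is dominated by the leading term. Replacing the penalty by a copy of the leading term only weakens the bound and exactly doubles the coefficient from $60$ to $120$, which is the form claimed in the corollary.

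First I would isolate the single inequality that does all the work. Since $\exp(\cdot)$ is increasing and $\tfrac16$ is a common prefactor, the first assertion follows from the first bound of Theorem~\ref{thm:unimodal_lower} as soon as
$$2\sqrt{T_1\log(18T_1)}\ \le\ 60\,\frac{T_1}{\bar H(k^*)},$$
because then $-60\,T_1/\bar H(k^*)-2\sqrt{T_1\log(18T_1)}\ge -120\,T_1/\bar H(k^*)$. Squaring both sides and cancelling one factor of $T_1$, this is equivalent to $T_1\ge \bar H(k^*)^2\,\tfrac{4\log(18T_1)}{60^2}$. The hypothesis supplies $T_1\ge \bar H(k^*)^2\,\tfrac{4\log(6T_1K)}{60^2}$ through the first entry of the maximum, and since any unimodal instance has a well-defined neighbourhood $\{k^*-1,k^*,k^*+1\}$ we have $K\ge 3$, hence $6T_1K\ge 18T_1$ and $\log(6T_1K)\ge\log(18T_1)$. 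Thus the hypothesis implies the required inequality and the first bound follows.

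The second assertion is proved symmetrically: I would repeat the identical algebra with $\bar H(k^*)$ replaced by $\bar h\,\bar H(i)$, now invoking the second entry $\bar H(i)\bar h$ of the maximum in the hypothesis together with the second bound of Theorem~\ref{thm:unimodal_lower}. This yields $2\sqrt{T_1\log(18T_1)}\le 60\,T_1/(\bar h\,\bar H(i))$ and hence the doubled exponent $-120\,T_1/(\bar H(i)\bar h)$.

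There is no deep obstacle here: the whole deduction is essentially a one-line manipulation once set up. The only points requiring care are the logarithmic bookkeeping, namely matching the $\log(6T_1K)$ of the hypothesis against the $\log(18T_1)$ of Theorem~\ref{thm:unimodal_lower} (where the harmless side condition $K\ge 3$ is used), and noting that although the hypothesis is self-referential in $T_1$ (it appears inside the logarithm on the right-hand side), this causes no difficulty since it is assumed outright rather than solved for.
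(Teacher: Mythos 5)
Your proof is correct. The paper states this corollary without an explicit proof (it is lifted from the Carpentier--Locatelli argument), and your derivation is exactly the intended one: use the hypothesis on $T_1$ to show $2\sqrt{T_1\log(18T_1)}\le 60\,T_1/\bar H(k^*)$ (respectively $\le 60\,T_1/(\bar H(i)\bar h)$), so the additive penalty in Theorem~\ref{thm:unimodal_lower} is absorbed into the leading term and the coefficient doubles from $60$ to $120$; your handling of the mismatch between $\log(6T_1K)$ in the hypothesis and $\log(18T_1)$ in the theorem via $K\ge 3$ is the right bookkeeping.
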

We can establish a lower bound using this corollary.

\begin{thm}
\label{thm:lower_bound}
For any unimodal bandit strategy  that returns arm $\hat{k}_{T_1}$ at time $T_1$, 
\begin{align}
\max\limits_{i \in \{k^*-1, k^*+1\}}P_i(\hat{k}_{T_1} \neq i) &\geq \frac{1}{6}\exp\left(-75\frac{T_1}{\bar{H}(i)}\right).
\end{align}
\end{thm}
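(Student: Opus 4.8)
The plan is to obtain Theorem~\ref{thm:lower_bound} as a direct sharpening of the \emph{second} inequality in Corollary~\ref{cor:unimodal_lower_cor}: the two statements are identical except that the constant $120$ must be improved to $75$. Since $120/\bar h = 75$ exactly when $\bar h = 8/5$, the entire argument reduces to proving the uniform lower bound $\bar h \ge 8/5$, after which I would substitute it into the corollary and read off $120/\bar h \le 75$. This is a short chain of inequalities rather than a new probabilistic argument, so no change-of-measure machinery is needed beyond what Corollary~\ref{cor:unimodal_lower_cor} already supplies.

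To prove $\bar h \ge 8/5$ I would expand each of its two summands. Recall $\bar h = \sum_{i \in \{k^*-1,k^*+1\}} \frac{1}{d_i^2\,\bar H(i)}$ and that $d_{k^*}=\frac12 - p_{k^*}=0$ because $p_{k^*}=\frac12$. Writing $\bar H(k^*-1)$ from its definition and using $\Delta^{k^*-1}_{k^*} = d_{k^*-1}+d_{k^*} = d_{k^*-1}$ gives
$$d_{k^*-1}^2\,\bar H(k^*-1) = 1 + \frac{d_{k^*-1}^2}{(d_{k^*-1}+d_{k^*-2})^2}.$$
Here the unimodal ordering $p_1 < \cdots < p_{k^*}$ translates into $d_{k^*-2} \ge d_{k^*-1}$, so $(d_{k^*-1}+d_{k^*-2})^2 \ge 4\,d_{k^*-1}^2$ and the second term is at most $1/4$. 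Hence $d_{k^*-1}^2\,\bar H(k^*-1) \le 5/4$, i.e. $\frac{1}{d_{k^*-1}^2\,\bar H(k^*-1)} \ge 4/5$. The symmetric computation for $i=k^*+1$, using $d_{k^*+2} \ge d_{k^*+1}$, yields the same bound, and summing the two gives $\bar h \ge 8/5$.

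With $\bar h \ge 8/5$ in hand, the second bound of Corollary~\ref{cor:unimodal_lower_cor} gives, for the relevant $i \in \{k^*-1,k^*+1\}$,
$$\max_{i \in \{k^*-1,k^*+1\}} P_i(\hat k_{T_1}\neq i) \ge \frac{1}{6}\exp\!\left(-120\frac{T_1}{\bar H(i)\,\bar h}\right) \ge \frac{1}{6}\exp\!\left(-75\frac{T_1}{\bar H(i)}\right),$$
which is exactly the claim (the $T_1$-condition of the corollary being inherited). I expect the main thing to watch is the boundary behaviour: when $k^*$ is close to an endpoint so that $k^*-2$ or $k^*+2$ falls outside $\{1,\dots,K\}$, the corresponding $\bar H$ loses a term and the associated summand of $\bar h$ becomes exactly $1 \ge 4/5$, so the estimate is only strengthened — but this case should be verified explicitly rather than glossed over. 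I would also make sure the index $i$ in the corollary's second inequality is matched consistently with the $\bar H(i)$ in the theorem statement; apart from these bookkeeping points, the argument is purely algebraic and carries no real analytic obstacle.
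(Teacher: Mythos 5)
Your proposal is correct and follows essentially the same argument as the paper: bound each summand via $d_{k^*\pm 1}^2\,\bar H(k^*\pm 1) \le 1 + \tfrac14 = \tfrac54$ (using $d_{k^*}=0$ and the unimodal ordering $d_{k^*\pm 2}\ge d_{k^*\pm 1}$), conclude $\bar h \ge \tfrac85$, and substitute into the second inequality of Corollary~\ref{cor:unimodal_lower_cor} to replace $120/\bar h$ by $75$. Your additional attention to the boundary case (when $k^*\pm 2$ lies outside the arm set) and to the index bookkeeping in the corollary is a minor refinement the paper glosses over, but it does not change the argument.
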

\begin{proof}
The proof is given in Appendix B.
\end{proof}
We see that unlike in the case of the lower bound found in \cite{carpentier2016tight}, the lower bound of the error probability is not dependent on $\log_2(K)$. In addition, the complexity factor depends only on the sub-optimality gap between the optimal arm and its neighbours. This observation is similar to the lowed bound on cumulative regret established in \cite{ICML2014_UnimodalBandits}.

\section{Numerical Simulations}
\label{sec:experiments}
In this section, we corroborate our theoretical results using simulations. We first describe the simulation setup with parameters used and  present the results in the following subsections.

\subsection{Simulation Parameters}
We use the IEEE $802.11$ad system with parameters as described in \cite{hba} for numerical simulations. The carrier frequency ($f$) is set at $60$ GHz and the bandwidth is set at $2.16$ GHz. The transmit power $P=50$ dBm is shared between $K$ antennas which vary from $16$ to $128$. For the line of sight (LOS) path, we have the path loss model as
\begin{align}
\label{eqn:path_loss}
    PL(dB)=-27.5+20 \log_{10}(f)+10 \alpha \log_{10}(d)+\chi,
\end{align}
where $d$ is the transmission distance, the path loss exponent $\alpha$ is taken as $1.74$, and $\chi$ is the shadow fading component which follows Normal distribution with zero mean and $2$ dB variance.  In \eqref{eqn:path_loss} $f$ is in MHz and $d$ is in meters. Depending upon the beam selected the signal strength varies in $[-80,-20]$ dBm. The algorithm parameter are kept at $\rho_1=3$, $\gamma=0.5$ and $\zeta=0.1$. The channel parameters for the simulations are given in Table \ref{tab:parameters}. Simulation results are averaged over $1000$ iterations and confidence intervals are shown (when significant).
\begin{center}
\begin{table}
\centering
\begin{tabular}{c c } 
\hline
 \hline
{\bf Parameter} & {\bf Value}  \\ [0.5ex] 
 \hline
Carrier frequency ($f$) & $60$ Ghz\\  

Bandwidth (W) & $2.16$ GHz\\

Noise spectrum density ($N_0$) & $-174 \text{ dBm/Hz}$  \\
 Shadow fading variance (log-normal $\sigma$) & 2 dB \\ 
 Number of beams ($N$) & 16-128\\
 HBA parameters ($\rho_1,\gamma,\zeta$) & $(3,0.5,0.1)$ \\
 Distance considered (d) & $(20,40, 60,80)$ m\\
 Path loss exponent & $1.74$\\
 \hline
 \hline
\end{tabular}
 \caption{Parameters for simulation}
  \label{tab:parameters}
  \end{table}
\end{center}
We compare the performance of {\it \ref{algo:UB3}} with the following algorithms:
\begin{itemize}
    \item {\bf Sequential Halving (Seq. Halv.) \cite{seqhalving}: } This algorithm is used for pure exploration in non-unimodal bandits for a fixed $T_1$ time slots. The algorithm was proved to be optimal \cite{carpentier2016tight}, and hence a comparison would give the idea about, how the additional information of unimodality would improve the performance. 
    \item {\bf Linear Search Elimination (LSE) \cite{ICML2011_UnimodalBandits}: } Although this algorithm was proposed for continuous arm unimodal bandit problems, we have considered the algorithm for fixed $T_1$ time slots and for discrete arms. A comparison of {\it UB3} with {\it LSE} is pertinent as it is a well-known algorithm for unimodal bandits. 
    \item {\bf Hierarchical Beam Alignment (HBA) \cite{hba}: }This algorithm was shown to have good performance for regret minimization, when compared to existing algorithms, considering the prior knowledge of channel fluctuations. Our comparison with {\it HBA} will of throughput comparison for the period after the best beam has been identified.
    \item {\bf Hierarchical Optimal Sampling of Unimodal
Bandits (HOSUB) \cite{ICC2021_HOSUB}: } This algorithm exploits the benefits of hierarchical codebooks
and the unimodality of RSS to achieve the best arm in the fixed $T_1$ time slots. Our comparison with {\it HOSUB} will of throughput comparison for the period after the best beam has been identified.
\end{itemize}
We did not include {\it UBA} algorithm \cite{INFOCOM2018_EfficientBeamAllignment} for comparison since {\it HBA} was shown to have better performance for beam identification. Also, the {\it ATS} algorithm is not compared against as it is designed to minimize the cumulative regret and does not stop the exploration process. We note the computational complexity of complexity {\it HBA} scales quadratically in $T_1$ whereas it is of $O(T_1)$ in {\it UB3}, where $T_1$ is the duration of IA phase. In addition, {\it HBA} requires prior knowledge of channel fluctuations, i.e, the variance of the noise parameter, which is not required in {\it UB3}.
\begin{figure*}[t]
	\centering
	\begin{subfigure}{0.31\textwidth}
		\includegraphics[scale=.4]{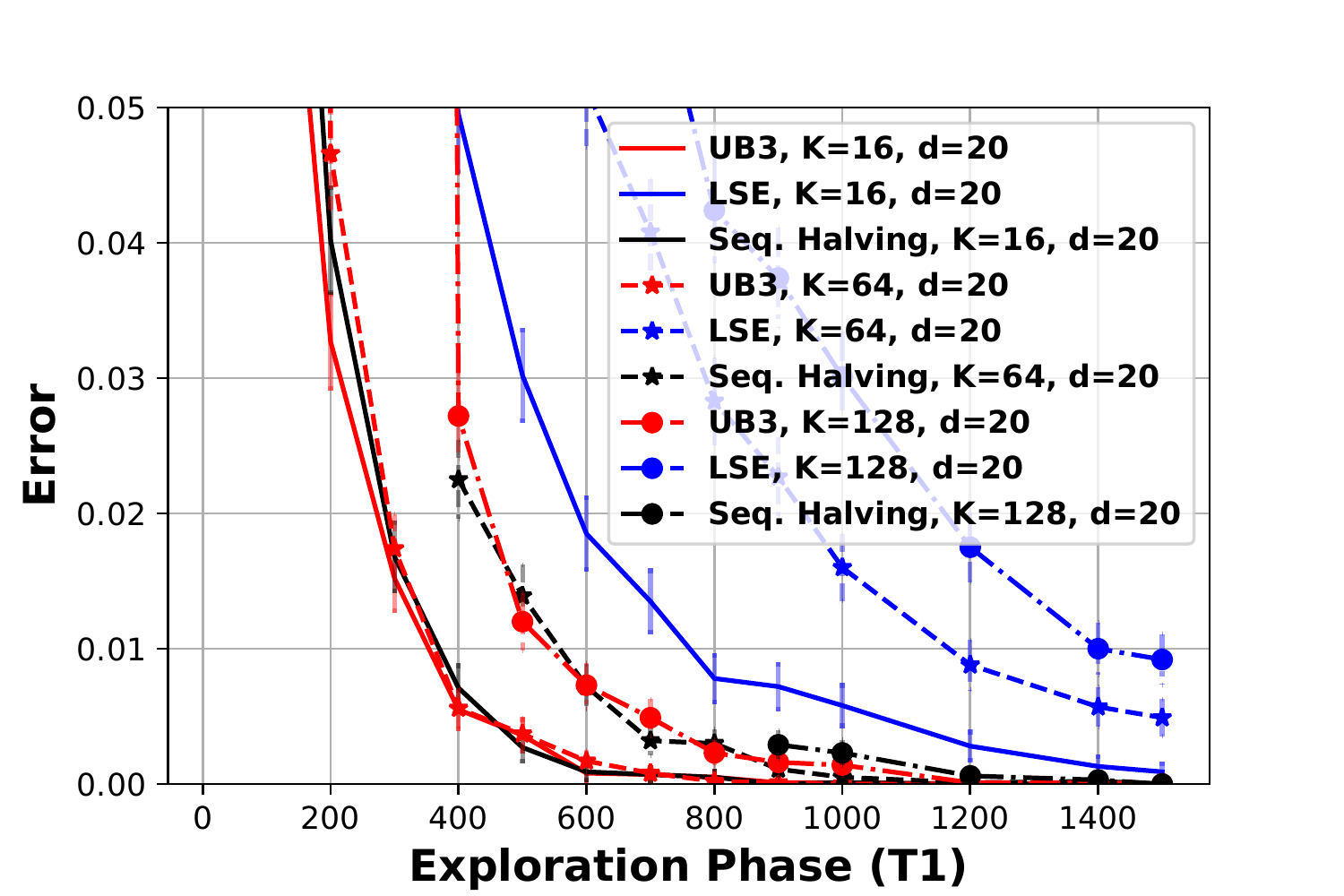}
    	\caption{Error Probability vs $T_1$, $d = 20$.}
    	\label{fig:err_vs_T1_K_d_20}
	\end{subfigure}
	\begin{subfigure}{0.31\textwidth}
		\includegraphics[scale=.4]{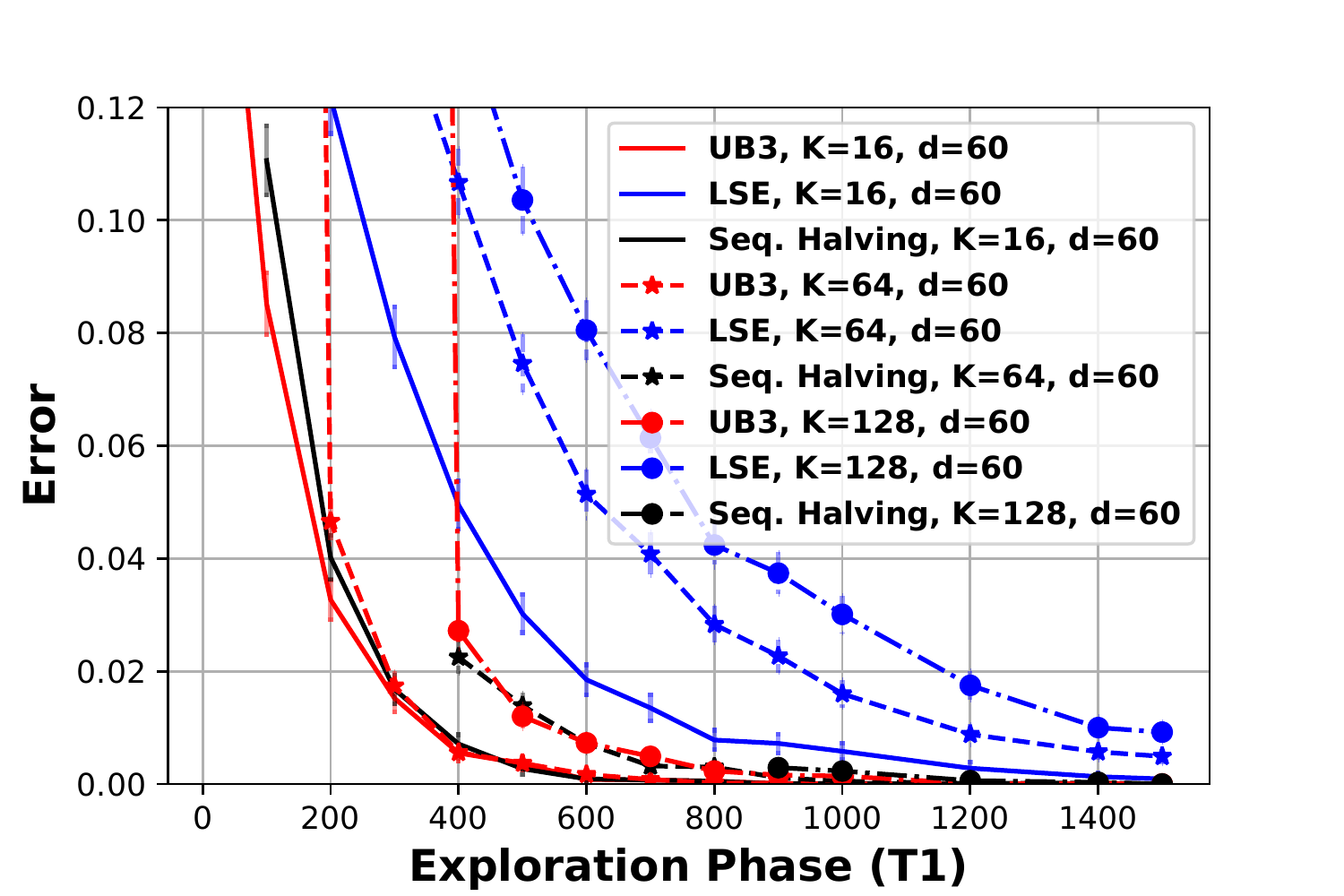}
    	\caption{Error Probability vs $T_1$, $d = 60$.}
    	\label{fig:err_vs_T1_K_d_60}
	\end{subfigure}
		\begin{subfigure}{0.31\textwidth}
		\includegraphics[scale=.4]{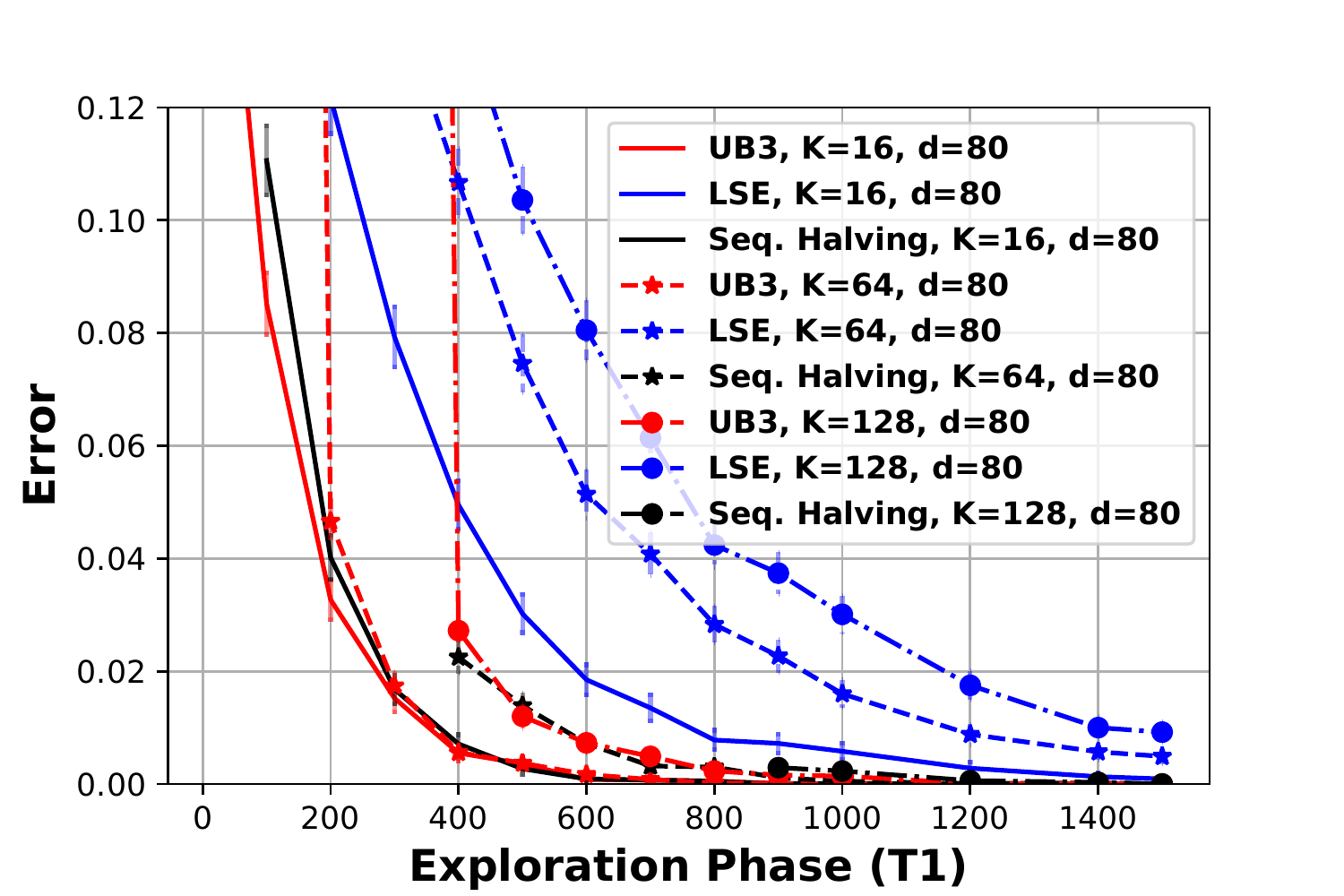}
    	\caption{Error Probability vs $T_1$, $d = 80$.}
    	\label{fig:err_vs_T1_K_d_80}
	\end{subfigure}
	\caption{Error performance of {\it UB3} vs $T_1$ for different no. of beams ($K$) and distance ($d$).}
	\label{fig:err_T1}
\end{figure*}
\begin{figure*}[t]
	\centering
	\begin{subfigure}{0.31\textwidth}
		\includegraphics[scale=.4]{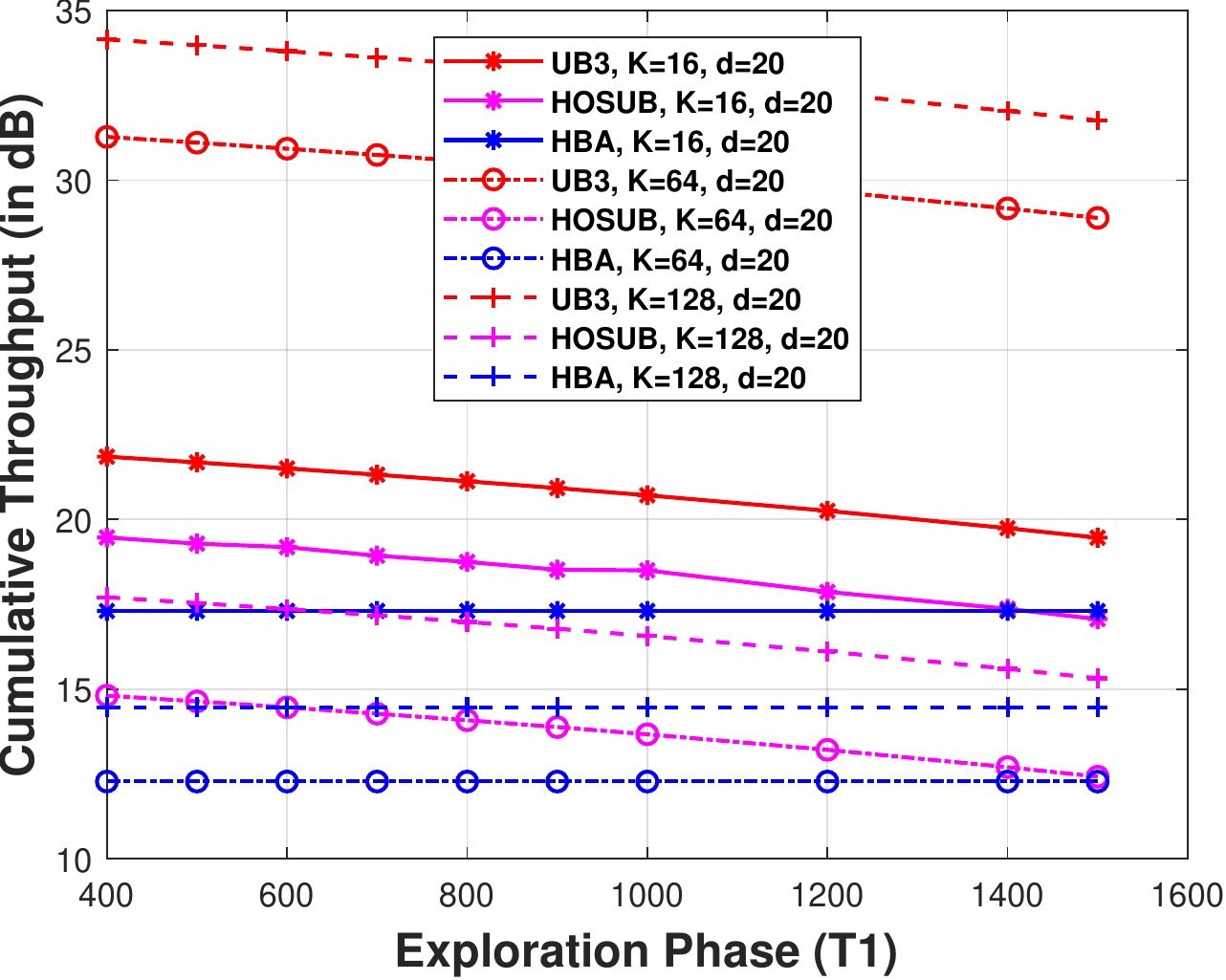}
    	\caption{Throughput vs $T_1$, $d = 20$.}
    	\label{fig:Throughput_vs_T1_K_d_20}
	\end{subfigure}
	\begin{subfigure}{0.31\textwidth}
		\includegraphics[scale=.4]{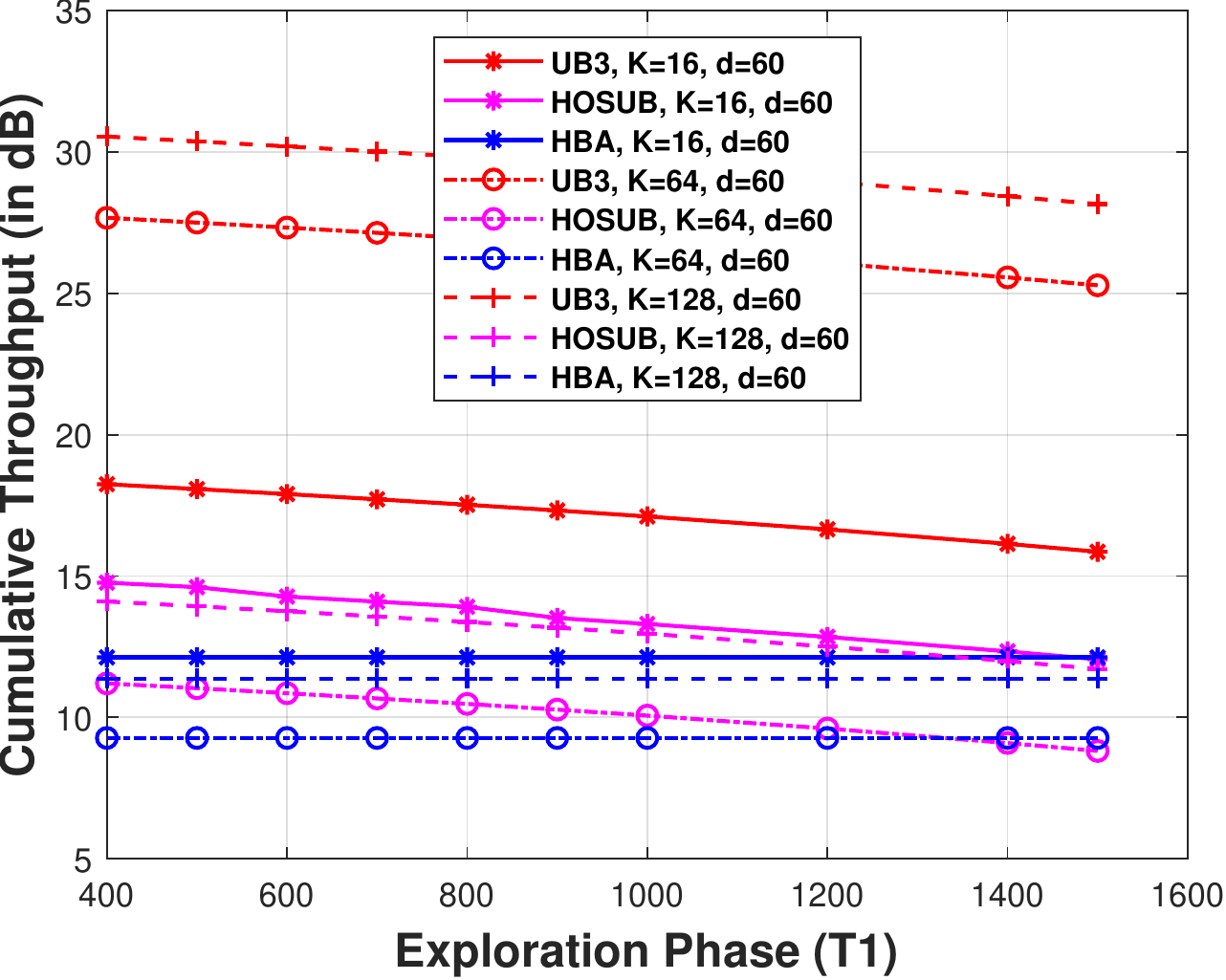}
    	\caption{Throughput vs $T_1$, $d = 60$.}
    	\label{fig:Throughput_vs_T1_K_d_60}
	\end{subfigure}
		\begin{subfigure}{0.31\textwidth}
		\includegraphics[scale=.4]{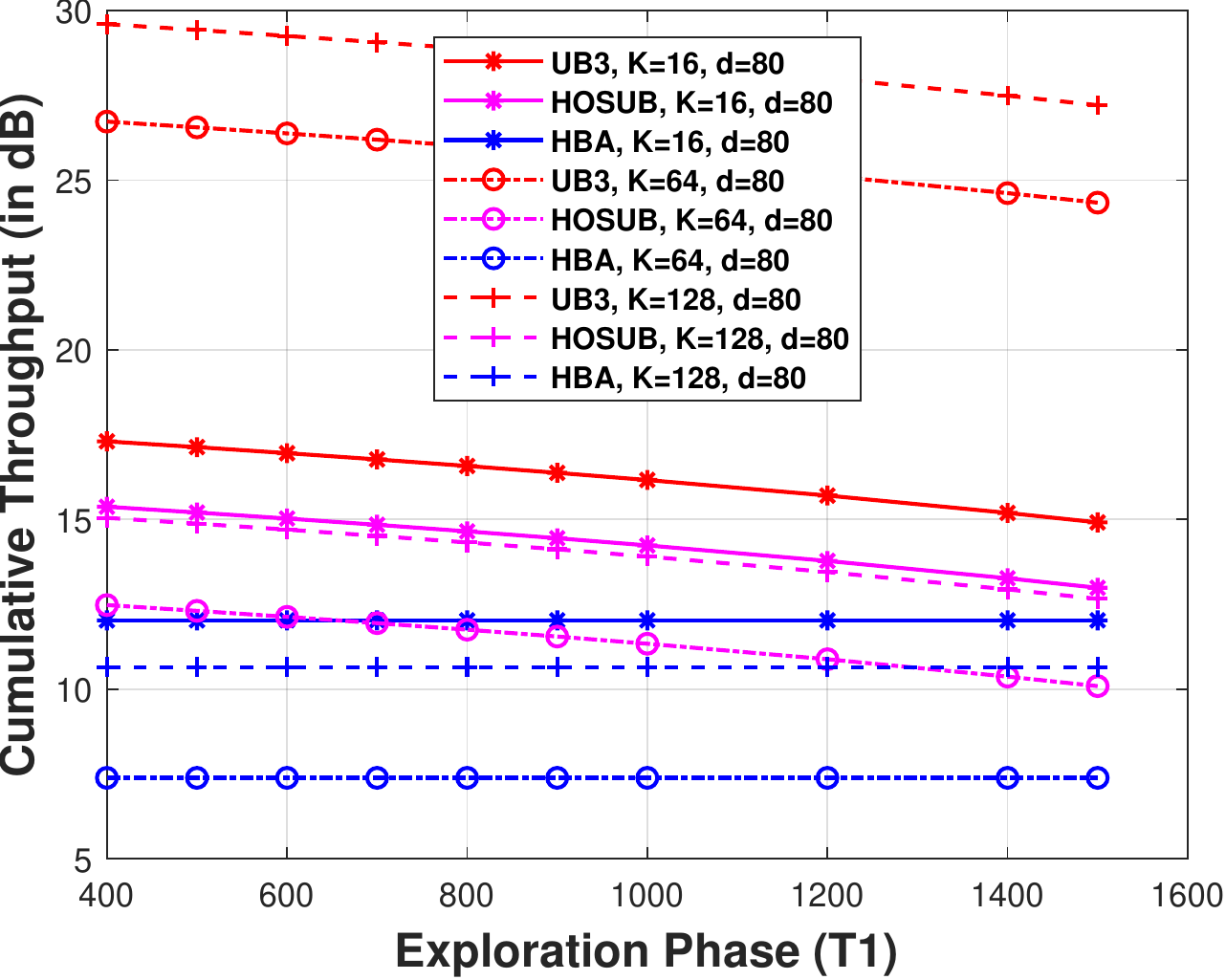}
    	\caption{Throughput vs $T_1$, $d = 80$.}
    	\label{fig:Throughput_vs_T1_K_d_80}
	\end{subfigure}
	\caption{Throughput performance of {\it UB3} vs $T_1$ for different no. of beams ($K$) and $d$.}
	\label{fig:throughput_T1}
\end{figure*}
\begin{figure*}[t]
\vspace{-3mm}
	\centering
	\begin{subfigure}{0.31\textwidth}
		\includegraphics[scale=.4]{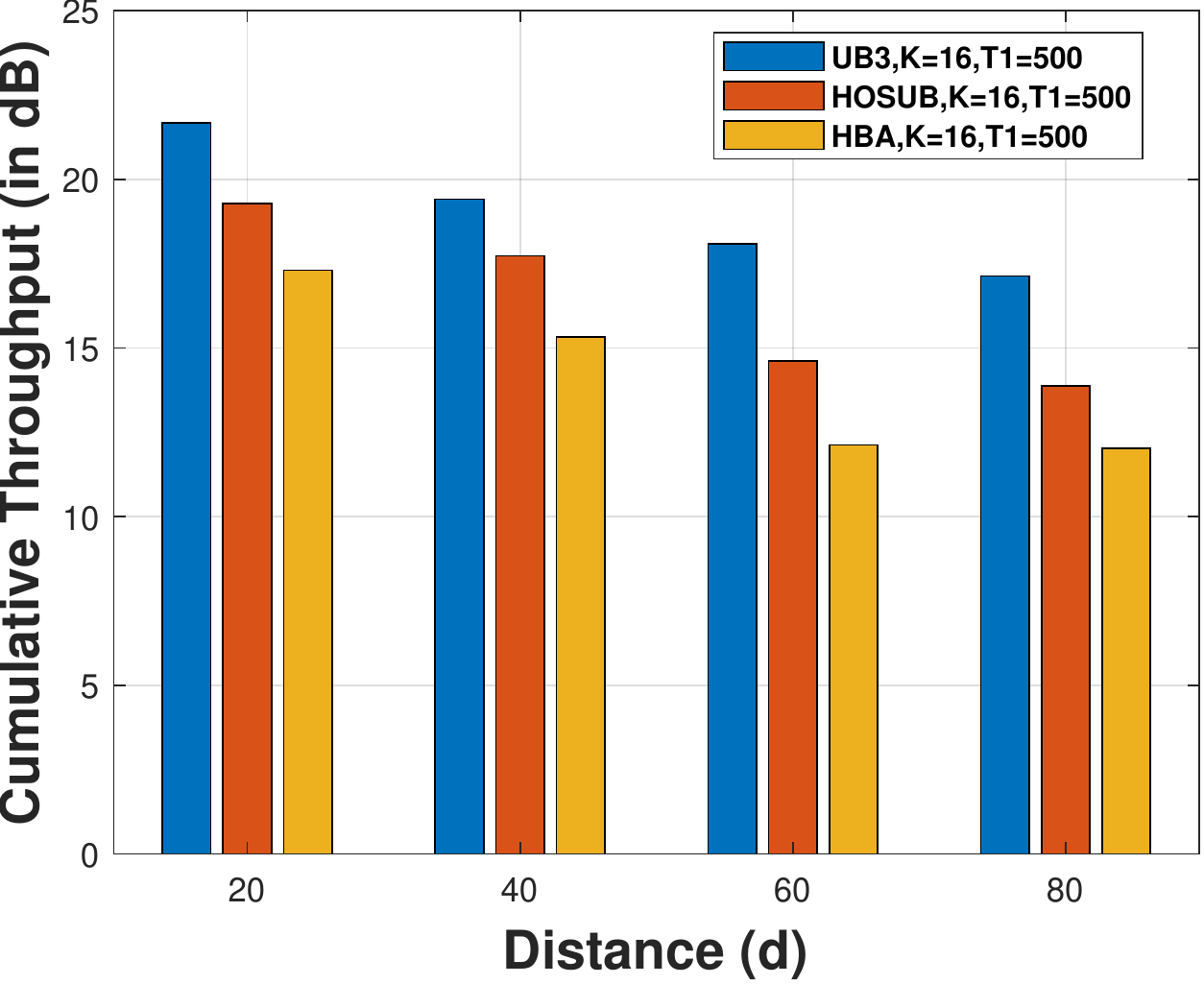}
    	\caption{Throughput vs $d, K=16, T_1 = 500$}
    	\label{fig:Throughput_vs_d_K_16}
	\end{subfigure}
	\begin{subfigure}{0.31\textwidth}
		\includegraphics[scale=.4]{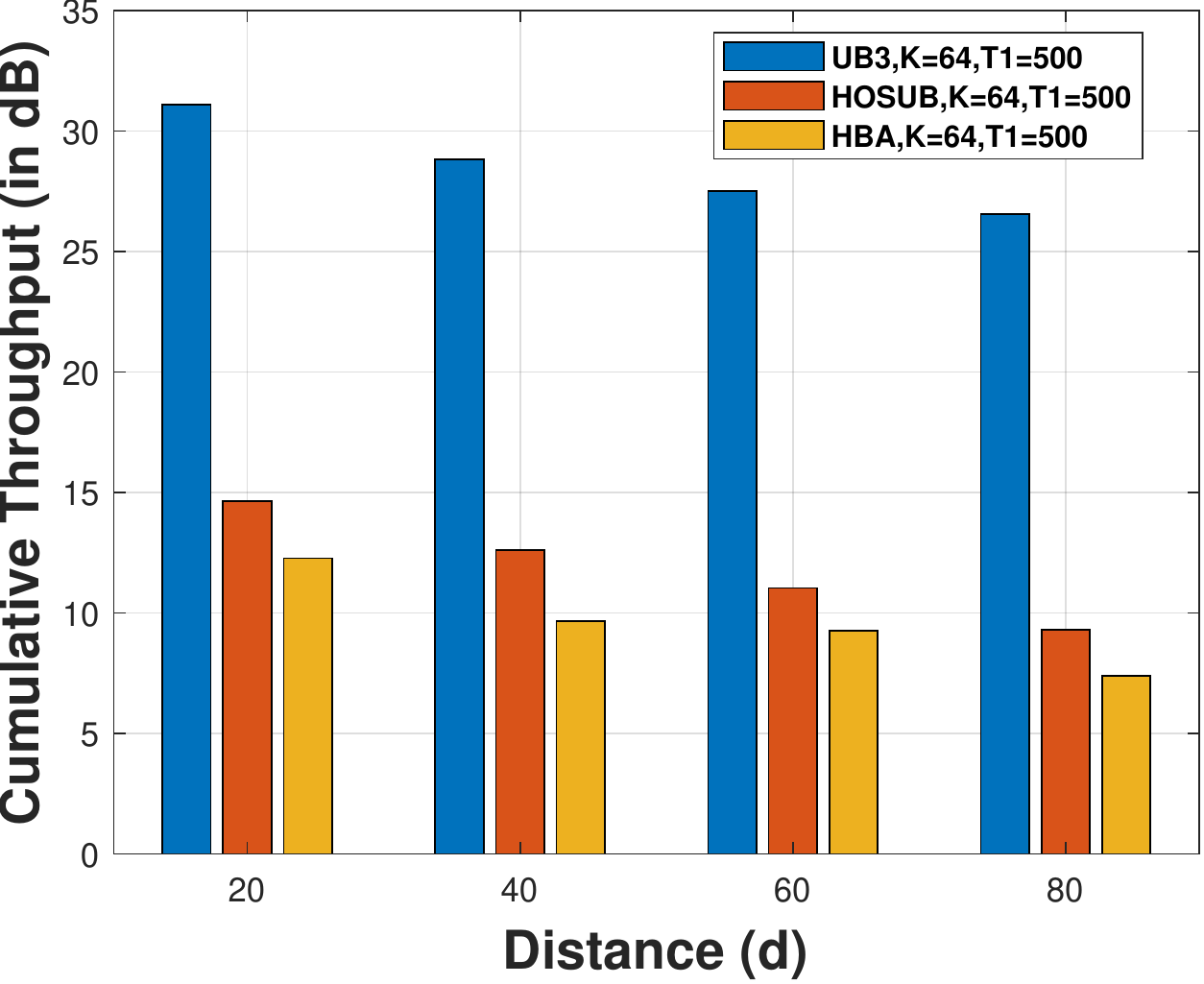}
    	\caption{Throughput vs $d, K=64, T_1 = 500$}
    	\label{fig:Throughput_vs_d_K_64}
	\end{subfigure}
		\begin{subfigure}{0.31\textwidth}
		\includegraphics[scale=.4]{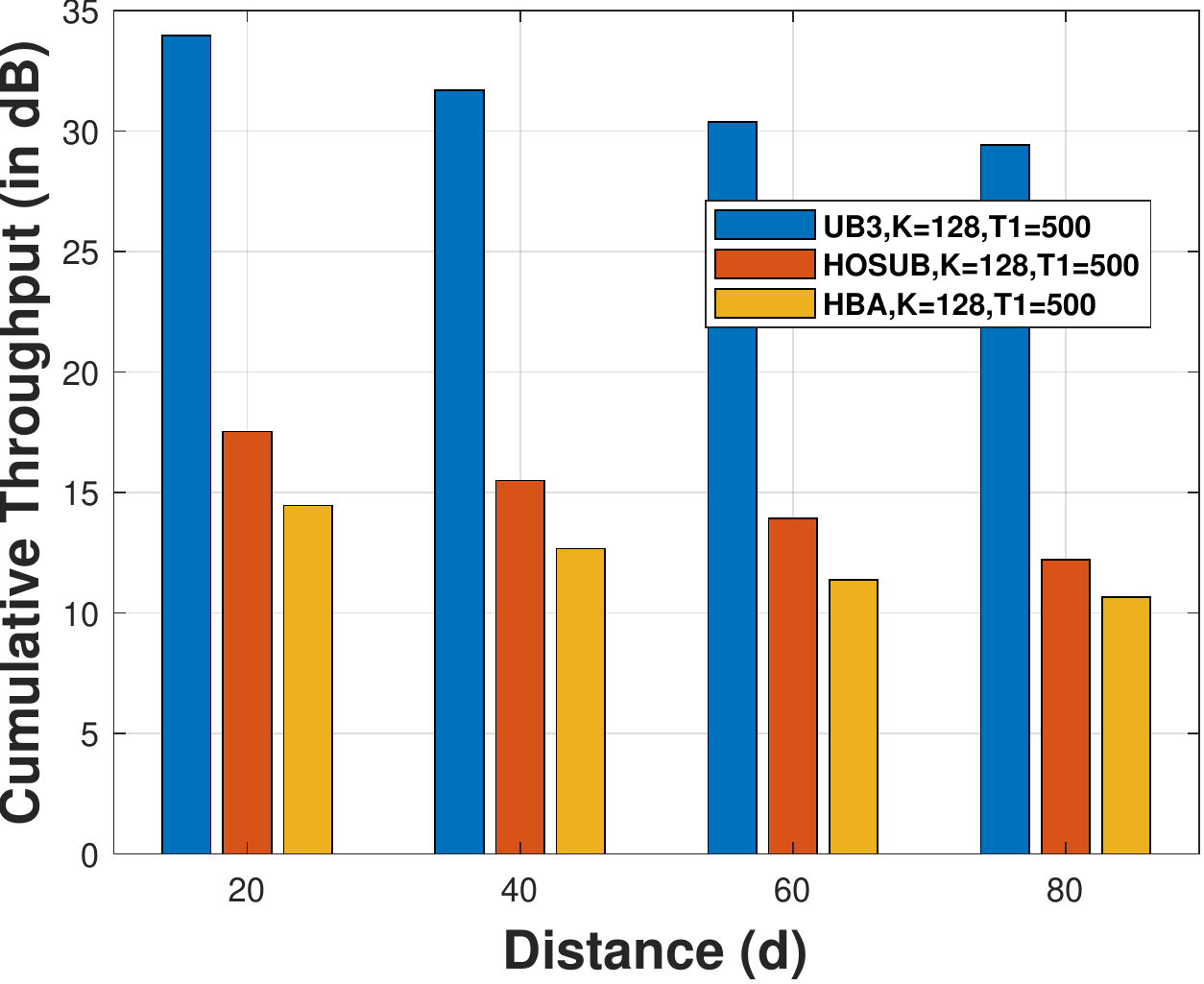}
    	\caption{Throughput vs $d, K=128, T_1 = 500$}
    	\label{fig:Throughput_vs_d_K_128}
	\end{subfigure}
	\caption{Throughput performance of {\it UB3} vs $d$ for different no. of beams ($K$) for $T_1=500$.}
	\label{fig:throughput_d}
\end{figure*}
\begin{figure*}[t]
	\centering
	\begin{subfigure}{0.31\textwidth}
		\includegraphics[scale=.4]{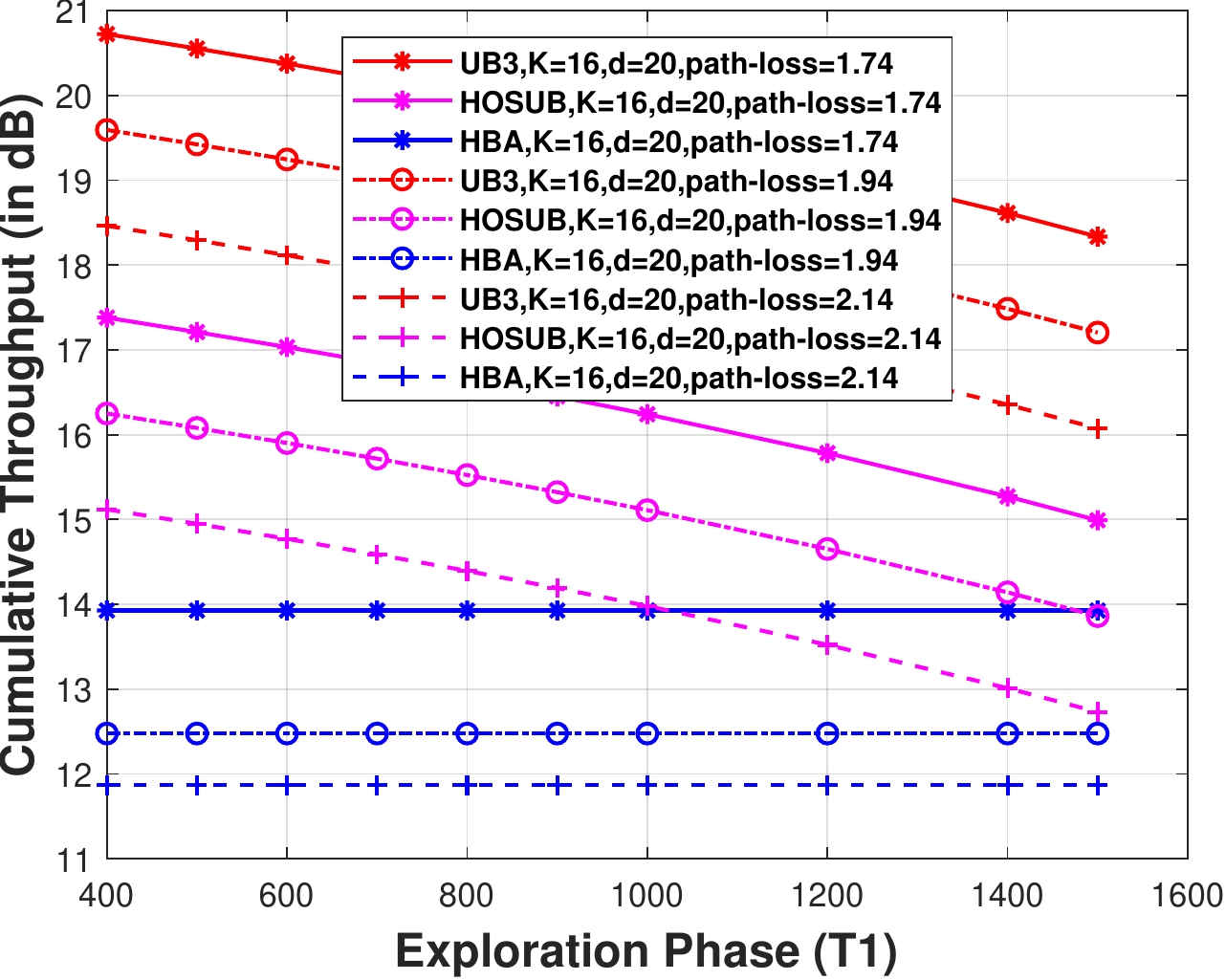}
    	\caption{Throughput vs $T_1, K=16, d=20$.}
    	\label{fig:Throughput_vs_T1_pathloss_K_16}
	\end{subfigure}
	\begin{subfigure}{0.31\textwidth}
		\includegraphics[scale=.4]{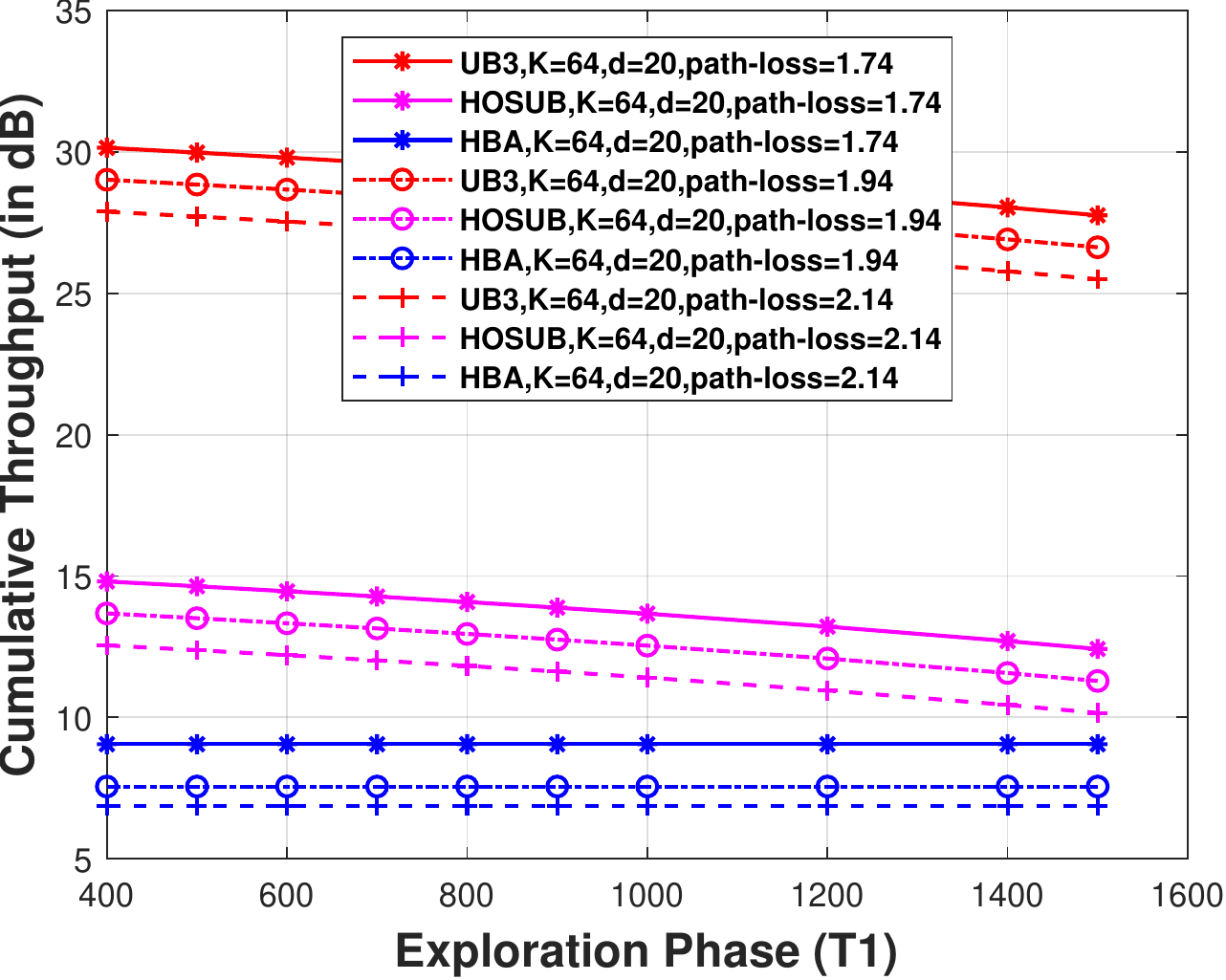}
    	\caption{Throughput vs $T_1, K=64, d=20$.}
    	\label{fig:Throughput_vs_T1_pathloss_K_64}
	\end{subfigure}
		\begin{subfigure}{0.31\textwidth}
		\includegraphics[scale=.4]{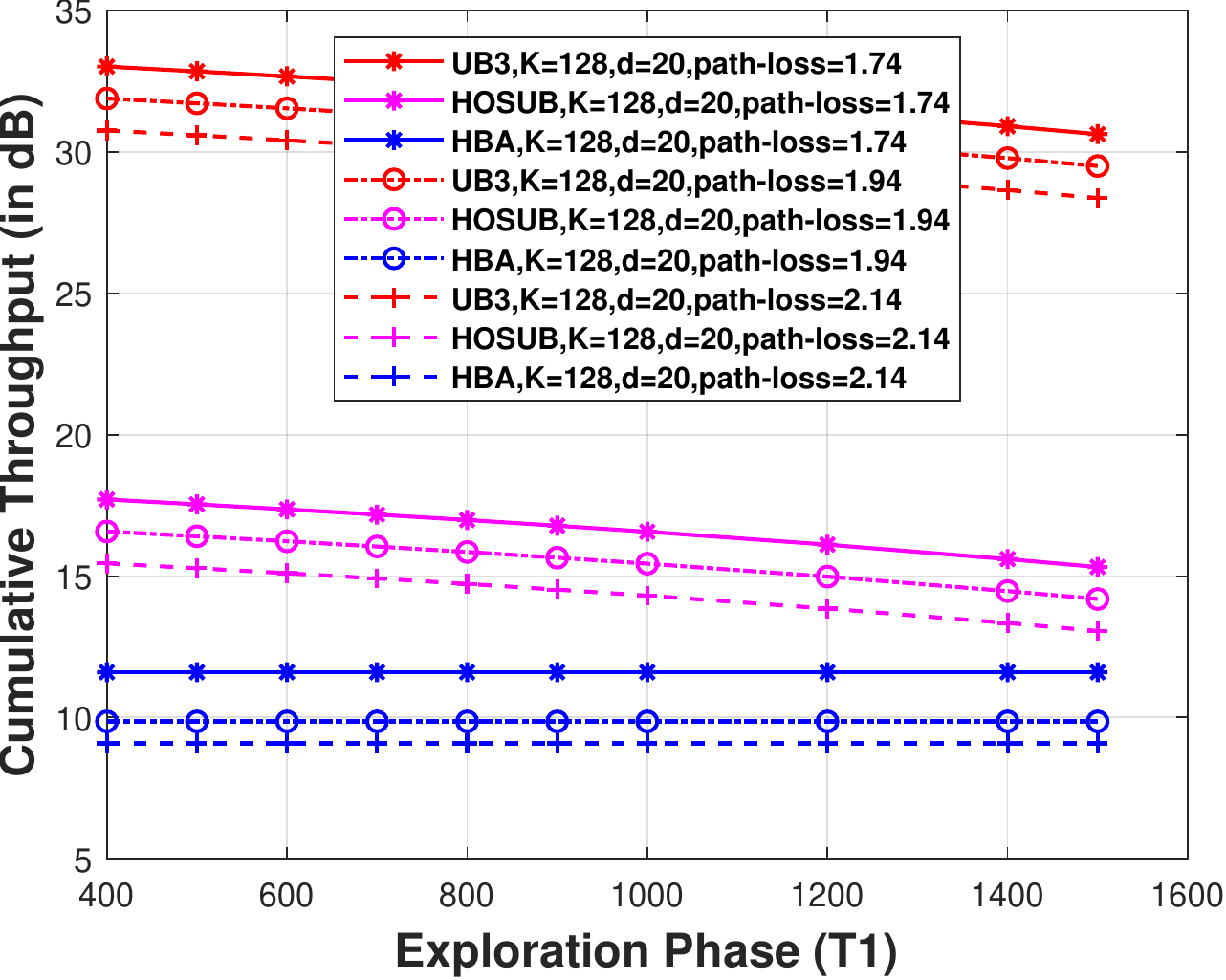}
    	\caption{Throughput vs $T_1, K=128, d=20$.}
    	\label{fig:Throughput_vs_T1_pathloss_K_128}
	\end{subfigure}
	\caption{Throughput performance of {\it UB3} vs $T_1$ for different path-loss exponent $\alpha$ and $d=20$.}
	\label{fig:throughput_pathloss}
\end{figure*}

\subsection{Comparison with other pure exploration algorithms}
We define the probability of error as the probability of not identifying the best beam after sampling for $T_1$ number of time slots. We consider $T_1$ as the exploration (IA) phase. We first compare the probability of error for {\it UB3} algorithm with {\it LSE} and {\it Seq. Halv.} which are used for pure exploration. We compare the probability of error performance of the algorithms for arm sizes of $K=\{16, 64,128\}$. The probability of error against the exploration time slots ($T_1$) is shown in Fig. \ref{fig:err_T1}. {\it LSE}, which has an equal number of sampling for the arms in all phases, has the worst probability of error performance than {\it Seq. Halv.}. The solid lines in the figure are for $K=16$, the dashed lines are for $K=64$ and the dot-dashed lines are for $K=128$. The comparisons are done for distance of $d=20,60$ and $80$ m.  The probability of error increases as we increase the beam size. However, for a small number of arms, both {\it Seq. Halv.} and {\it UB3} have comparable performance, but as the number of beams increases, {\it UB3} has a lesser probability of error compared to {\it Seq. Halv.} as evident from the case of $K=64$ and $K=128$. However, {\it{UB3}} can identify the best beam with a probability of more than 95\% within $100$ time slots for $16$ number of beams, while the other state-of-the-art algorithms take need at least $200$ time slots for executions. Note that for {\it Seq. Halv.} require at least $200, 400, 900$ rounds for $K=16,64,128$, respectively, to complete their execution hence their graph start after that many slots.  Moreover, as distance increases the probability of error of all the algorithms increases for each beam size. 

\par It is to be noted that even though the minimum time slots requirement (as a function of $K$) for {\it LSE} is much smaller than both {\it UB3} and {\it Seq. Halv.} for its feasible execution, the number of samples it runs for arms neighbouring to $k^*$ is much lesser resulting in more probability of error. {\it Seq. Halv.} needs at least $K\log_2(K)$ number of time slots to complete one phase and has samples for all arms in every phase. Hence, it has much fewer time slots remaining when the algorithm is executed in the neighborhood of $k^*$ as compared to {\it UB3}. Thus, the minimum time slots requirement for {\it UB3} as a function of $K$ is much lesser than that of {\it Seq. Halv.}, in addition to having a better probability of error performance. This demonstrates the advantage of exploiting the unimodality of the reward function.

\subsection{Comparison of throughput performance}
In this subsection we compare the throughput performance of {\it UB3} with the {\it HBA} and {\it HOSUB} algorithms. We look at the mean cumulative throughput, which is defined as the product of the mean value of the selected beam at the end of exploration normalized with the mean power of the best beam and the remaining available time slots, i.e,
\begin{align*}  \text{Throughput}=\mu^{\text{norm}}_{b_{L+1}}\times (T-T_1),
\end{align*}
where $T$ is the total available time slots and $T_1$ is the number of time slots available for exploration of the best beam. Note that {\it HBA} will not have a fixed $T_1$, and hence we will find the throughput after the expected exploration time $E(T_1)$, obtained as the average of many runs. Thus the throughput will be fixed for {\it HBA} for a fixed $T$, while it will vary for varying $T_1$ for {\it UB3} and {\it HOSUB}. Fig. \ref{fig:throughput_T1} compares the cumulative throughput of {\it UB3}, {\it HBA} and {\it HOSUB} for different $K$ and distances of $d=20,60$ and $80$ m for $T = 3000$ time slots.

As the number of beams (arms) increases, the beam becomes narrower, and hence the reward for the best beam also increases. In that case, {\it HOSUB} requires more exploration time slots to learn the optimal beam, thereby exploiting sub-optimal beams in the data transmission phase for the given $T_1$ time slots. On the other hand, for {\it HBA}, since the average time required to find the best arm increases, the average throughput will decrease as the number of arms increases. This increasing and decreasing of throughput for {\it HBA} is seen in Fig. \ref{fig:throughput_T1} and Fig. \ref{fig:throughput_d}. However, the {\it UB3} algorithm is not much affected by the number of increases in arms for finding the best arm, and hence the throughput will only increase with increasing gain for the best arm. {\it UB3} can improve the throughput by more than 45\% compared to {\it HBA} and by more than 15\% compared to {\it HOSUB}. This too is evident from both Fig. \ref{fig:throughput_T1} and Fig. \ref{fig:throughput_d}. However, the throughput will decrease as we increase the transmission distance, refer Fig. \ref{fig:throughput_d}.

Finally, we compare the throughput performance for varying path loss exponent given as $\alpha\in\{1.74,1.94,2.14\}$, as shown in Fig. \ref{fig:throughput_pathloss}. The path loss exponent increases when there are more barriers; for example when the receiver moves from outdoor to indoor. The throughput indeed decreases for all {\it HBA}, {\it HOSUB} and {\it UB3}, but {\it UB3} still outperforms {\it HBA} and {\it HOSUB}.

\section{Conclusion and future work}
\label{sec:discussion}

We investigated the problem of beam alignment in mmWave systems using the multi-armed bandits (MAB). While earlier works used the cumulative regret minimization setting to learn the best arm, we used the fixed-budget pure-exploration setting framework exploiting the unimodal structure of the received signal strength of the beams. We developed an algorithm named \ref{algo:UB3} that identified the best beam with high probability. We gave an upper bound on the error probability of UB3 and established that it is optimal. Simulations validated the efficiency of {\it UB3} which can identify the best beam using a smaller number of explorations that can translate to improvement in throughput by more than 15\% compared to other state-of-art algorithms. Due to its simple structure, {\it UB3} is easy to implement and comes with a lower computational complexity -- {\it UB3} has a computational complexity of $\mathcal{O}(T)$, whereas it is $\mathcal{O}(T^2)$ for {\it HBA} \cite{hba}. The {\it UBA} algorithm in \cite{INFOCOM2018_EfficientBeamAllignment} needs to solve a convex optimization problem in each time which is expensive.

{\it UB3} works well when only the LOS path is present and the RSS of beams satisfies the unimodal property. However, when NLOS paths are present, we are faced with multimodal functions. {\it UB3} can be adapted to handle the multi-modal functions by using backtracking ideas proposed in \cite{backtrack}. In backtracking, eliminated arms are revisited to check if it is done by mistake and thus will not be stuck in a sub-optimal set of beams. It is interesting to evaluate the {\it UB3} algorithms with backtracking on multimodal function and establish its performance guarantees. 

\section{Appendix}
\label{sec:appendix}
In this section, we will provide proof of the main results. 

\subsection{Proof of Theorem \ref{thm:UB3_upper}}
\begin{proof}
{\it UB3} runs for $T_1$ horizon in $L+1$ number of phases that satisfies \eqref{eqn:L_T}, where $L=\frac{\log_2 K/3}{\log_2 3/2}$ and outputs the arm $\hat{k}_{L+1}$. We will now upper bound the probability of error as,
\begin{align}\label{eq:ubound}
    P(\hat{k}_{L+1}\ne b_{k^*})&=\sum_{l=1}^{L+1}P(b_{k^*} \text{ elim. in } l|b_{k^*} \text{ not elim. in }<l)\nonumber\\
    &\le \sum_{l=1}^{L+1}P(b_{k^*} \text{ elim. in } l).
\end{align}
The best arm is eliminated in phase $l$ in the following cases:
\begin{enumerate}
    \item $b_{k^*} \in \{k^M,\dots, k^A\}$, and $\hat{\mu}^l_{k^B}$ or $\hat{\mu}^l_{k^N}$ is greater than both $\hat{\mu}^l_{x^M}$ and $\hat{\mu}^l_{k^A}$
    \item $b_{k^*} \in \{k^B,\dots, k^N\}$, and $\hat{\mu}^l_{k^M}$ or $\hat{\mu}^l_{k^A}$ is greater than both $\hat{\mu}^l_{k^B}$ and $\hat{\mu}^l_{k^N}$
\end{enumerate}
The two cases are illustrated in Fig. \ref{case12}. From Remark \ref{rem:elimination}, $b_{k^*}$ will not get eliminated if $b_{k^*}\in \{k^A,\dots, k^B\}$. However  we will upper bound the probability of error by assuming that $b_{k^*}$ will always fall in the above two cases. Notice that Case 1 and Case 2 are symmetrical. Hence we can consider that $b_{k^*}$ will always fall in either one of the cases. Without loss of generality, we  consider Case 1. 
\begin{figure}[t]
    \centering
    \begin{tikzpicture}[line width=1.5pt]
        \draw (-0.2,0) --++(7,0) ; 
        \foreach \i in{0,2.2,4.4,6.6}{\draw [thin](\i,0.1)--++(0,-0.2);}
        \draw[red,dashed] (0, 0.2) .. controls (5.3,1.8) .. (6.6, 0.4);
        \draw[blue] (0, 0.4) .. controls (1.3,1.8) .. (6.6, 0.2);
        \draw[blue, dotted, line width=1] (1.5,1.3)--++(0,-1.3)node[below]{$b_{k^*}$};
        \draw[red, dotted, line width=1] (5.2,1.3)--++(0,-1.3)node[below]{$b_{k^*}$}; 
        \node at (0,-0.3) {{$k^M$}};
        \node at (2.2,-0.3) {{$k^A$}};
        \draw [decorate,decoration={brace,mirror,amplitude=10pt},yshift=0pt,line width=1pt] (2.2,-0.5) -- (4.4,-0.5) ;
        \node at (3.3,-1.1) {\footnotesize$\frac{j_l}{3}D_L$};
        \node at (4.4,-0.3) {{$k^B$}};
        \node at (6.6,-0.3) {{$k^N$}};
        \node[blue] at (0.2,1.7){Case 1};
        \node[red] at (6.2,1.7){Case 2};
    \end{tikzpicture}
    \caption{Different cases of elimination in any phase $l$. $b_{k^*}$ will not get eliminated if it is in between arms $k^A$ and $k^B$.}
    \label{case12}
     \vspace{-8mm}
\end{figure}
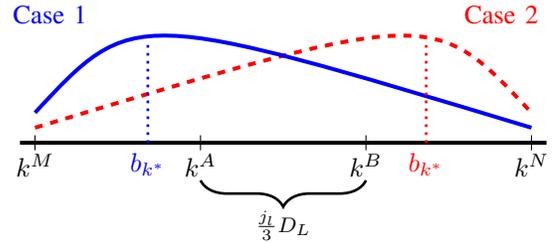
\begin{align}
    P(b_{k^*} &\text{ elim. in } l)\le P(b_{k^*} \text{ elim. in } l|b_{k^*}\in\{k^M,\dots,k^A\})\nonumber
\end{align}
\begin{align}
\label{eq:mu_b}
  &P(b_{k^*} \text{ elim. in } l) \nonumber\\
  &\le P(\hat{\mu}^l_{k^B}> \hat{\mu}^l_{k^M} \text{and}\,\, \hat{\mu}^l_{k^A}|b_{k^*}\in\{k^M,\dots,k^A\})\nonumber\\
    &\quad+P(\hat{\mu}^l_{k^N}> \hat{\mu}^l_{k^M} \text{and}\,\, \hat{\mu}^l_{k^A}|b_{k^*}\in\{k^M,\dots,k^A\})\nonumber\\
    &\le 2P(\hat{\mu}^l_{k^B}> \hat{\mu}^l_{k^M} \text{and}\,\, \hat{\mu}^l_{k^A}|b_{k^*}\in\{k^M,\dots,k^A\}),
\end{align}
where the last inequality is due to the fact that, for Case 1, $\mu_{k^B}\ge\mu_{k^N}$ by unimodality. Now for Case 1, $\mu_{k^A}$ is always greater than $\mu_{k^B}$, but $\mu_{k^M}$ may not be greater than  $\mu_{k^B}$. Then, we can further upper bound \eqref{eq:mu_b} as
\begin{align}
\label{eqn:prob_ub}
     P(b_{k^*} \text{ elim. in } l)\le 2P(\hat{\mu}^l_{k^B}> \hat{\mu}^l_{k^A}|b_{k^*}\in\{k^M,..,k^A\}).
\end{align}
Applying Hoeffding's inequality in \eqref{eqn:prob_ub}, we have
\begin{align}\label{eq:hoeff}
    P(\hat{\mu}^l_{k^B}> \hat{\mu}^l_{k^A})\le \exp\left\{-\frac{1}{2}\frac{N_l}{4}\left(\Delta_{A,B}\right)^2\right\},
\end{align}
where $\Delta_{A,B}=\mu_{k^A}-\mu_{k^B}$ which is greater than $0$ for Case 1. From Assumption \ref{assump:dl}, and the fact that there are at least $\frac{j_l}{3}$ arms between $k^A$ and $k^B$, for Case 1 we have, $\Delta_{A,B} \geq (j_l/3) D_L$.
Thus from \eqref{eq:mu_b} and \eqref{eq:hoeff} we have, 
\begin{align}
\label{eq:tlkl}
    P(b_{k^*} \text{ elim. in } l)\le 2\exp\left\{-\frac{N_l}{72}\bigg(j_lD_L\bigg)^2\right\}.
\end{align}
Using $j_l=\left(\frac{2}{3}\right)^lK$ in \eqref{eq:tlkl} we can find the probability of best arm getting eliminated in phase 1 and 2, phase $L+1$, and the rest of the phases separately. Using \eqref{eq:alg21}, we have
 \begin{align}
 \label{eq:round12}
    P(b_{k^*} \text{ elim. in } 1\& 2)\le & 2\exp{\left\{-\frac{T_1K}{32}D_L^2\right\}}\nonumber\\&
    +2\exp\left\{-\frac{T_1K}{72}D_L^2\right\}.
\end{align}
For phase $L+1$, since the best arm is selected among 3 arms when each arm is sampled $T_1/9$ times, we have
\begin{align}\label{eq:L1}
    P(b_{k^*}\text{ elim. in phase }L+1)\le 2\exp\left\{-\frac{T_1}{18}D_L^2\right\}.
\end{align}
From \eqref{eq:tlkl}, the error probability for the remaining phases is
\begin{align}\label{eq:Lrest}
     P(\text{best arm} & \text{ elim. in phase 3 to phase L})\nonumber\\&\le 2\sum_{l=3}^{L}\exp\left\{-\frac{T_1}{8}\frac{K^2}{9}\left(\frac{2}{3}\right)^{2(l-1)}\frac{2^{L-l+1}}{3^{L-l+2}}D_L^2\right\}\nonumber\\
     &\le  2\sum_{l=3}^{L}\exp\left\{-\frac{T_1K}{48}\left(\frac{2}{3}\right)^lD_L^2\right\}\nonumber\\
     &\le 2(L-2)\exp\left\{-\frac{T_1}{16}D_L^2\right\}.
\end{align}
By \eqref{eq:ubound}, \eqref{eq:round12}, \eqref{eq:L1} and \eqref{eq:Lrest}, we obtain the upper bound as
\begin{align}
     P(&\hat{k}_{L+1}\ne b_{k^*})\nonumber\\
     &\le 2\exp\left\{-\frac{T_1}{18}D_L^2\right\} +2\exp{\left\{-\frac{T_1K}{32}D_L^2\right)}\nonumber\\
     &+2\exp\left\{-\frac{T_1K}{72}D_L^2 \right\}+2(L-2)\exp\left\{-\frac{T_1}{16}D_L^2\right\}.\nonumber\qedhere
\end{align}
\end{proof}

\subsection{Proof of Theorem \ref{thm:lower_bound}}
\begin{proof}
We have, $p_k = \frac{1}{2}-d_k$ such that $p_k \in [1/4,1/2]$ and follows unimodality and $p_{k^*} = \frac{1}{2}$.  Upper bounding $\bar{h}$, we have
\begin{align}
    \bar{h} &= \sum\limits_{i\in\{k^*-1, k^*+1\}}\frac{1}{d^2_i\bar{H}(i)}\nonumber\\
    &= \frac{1}{d^2_{k^*-1}\bar{H}(k^*-1)} + \frac{1}{d^2_{k^*+1}\bar{H}(k^*+1)}\nonumber\\
    &= (I) + (II).
    \end{align}
    
We will upper bound (I) and (II),
\begin{align}
    &d^2_{k^*-1}\bar{H}(k^*-1) =  d^2_{k^*-1}\sum\limits_{k \in \{k^*-2,k^*\}}\frac{1}{(d_{k^*-1}+d_k)^2}\nonumber\\
    \intertext{Since $d_{k^*} = 0$ and $d_{k^*-2} \geq d_{k^*-1},$ we get}
    &d^2_{k^*-1}\bar{H}(k^*-1)\leq 1 + \frac{1}{4} = \frac{5}{4} \label{eqn:new_bound_1}\\
    &d^2_{k^*+1}\bar{H}(k^*+1) =  d^2_{k^*+1}\sum\limits_{k \in \{k^*, k^*+2\}}\frac{1}{(d_{k^*+1}+d_k)^2}\nonumber
\end{align}
Since $d_{k^*} = 0$ and $d_{k^*+2} \geq d_{k^*+1},$ we get
\begin{align}
    d^2_{k^*+1}\bar{H}(k^*+1)&\leq 1 + \frac{1}{4} = \frac{5}{4}. \label{eqn:new_bound_2}\\
    \intertext{By \eqref{eqn:new_bound_1} and \eqref{eqn:new_bound_2} we get} 
    \bar{h} \geq \frac{4}{5} + \frac{4}{5} = \frac{8}{5}.\nonumber\\
    \intertext{Putting the value of $\bar{h}$ in Corollary we get}
  \implies  \max\limits_{i \in \{k^*-1, k^*+1\}}P_i(\hat{k}_T \neq i) &\geq exp\left(-75\frac{T}{\bar{H}(i)}\right).\nonumber\qedhere
\end{align}
\end{proof}


\bibliographystyle{IEEEtran}
\bibliography{ref}

\end{document}